\renewcommand{\backref}[1]{}
\renewcommand{\backrefalt}[4]{%
\ifcase #1 %
\or
[p.\ #2]%
\else
[pp.\ #2]%
\fi}
\newtheorem{theorem}{Theorem}
\newtheorem{definition}[theorem]{Definition}
\newtheorem{proposition}[theorem]{Proposition}
\newcommand{\ket}[1]{\vert #1 \rangle}
\newcommand{\bin}[0]{\{0,1\}}
\title{A Qubit, a Coin, and an Advice String Walk Into a Relational Problem}
\author{
Scott Aaronson\thanks{University of Texas at Austin and OpenAI. \ Email:
\texttt{aaronson@cs.utexas.edu}. \ Supported by a Vannevar Bush Fellowship from the US
Department of Defense, the Berkeley NSF-QLCI CIQC Center, a Simons Investigator Award, and the Simons
\textquotedblleft It from Qubit\textquotedblright\ collaboration.}
\and
\stepcounter{footnote}Harry Buhrman\thanks{QuSoft, CWI, and University of Amsterdam. \ Email:
\texttt{buhrman@cwi.nl}. \ Supported in part by the Dutch Research Council (NWO/OCW), Gravitation Programmes Quantum Software Consortium  (project number 024.003.037) and Networks (project number 024.002.003).}\and
William Kretschmer\thanks{Simons Institute for the Theory of Computing and University of California, Berkeley. \ Email:
\texttt{kretsch@berkeley.edu}. \ Supported by an NDSEG Fellowship and a Simons Quantum Postdoctoral Fellowship. Most of this work done while a student at UT Austin.}
}
\date{}
\begin{document}

\maketitle

\begin{abstract}
Relational problems (those with many possible valid outputs) are different from decision problems, but it is easy to forget just \textit{how} different. \ This paper initiates the study of $\mathsf{FBQP/qpoly}$, the class of relational problems solvable in quantum polynomial-time with the help of polynomial-sized quantum advice, along with its analogues for deterministic and randomized computation ($\mathsf{FP}$, $\mathsf{FBPP}$) and advice ($\mathsf{/poly}$, $\mathsf{/rpoly}$).

Our first result is that $\mathsf{FBQP/qpoly} \ne \mathsf{FBQP/poly}$, \textit{unconditionally}, with no oracle---a striking contrast with what we know about the analogous decision classes. \ The proof repurposes the separation between quantum and classical one-way communication complexities due to Bar-Yossef, Jayram, and Kerenidis. \ We discuss how this separation raises the prospect of near-term experiments to demonstrate ``quantum information supremacy,'' a form of quantum supremacy that would not depend on unproved complexity assumptions.

Our second result is that $\mathsf{FBPP} \not\subset \mathsf{FP/poly}$---that is, \textit{Adleman's Theorem fails for relational problems}---unless $\mathsf{PSPACE} \subset \mathsf{NP/poly}$. \ Our proof uses $\mathsf{IP}=\mathsf{PSPACE}$ and time-bounded Kolmogorov complexity. \ On the other hand, we show that proving $\mathsf{FBPP} \not\subset \mathsf{FP/poly}$ will be hard, as it implies a superpolynomial circuit lower bound for $\mathsf{PromiseBPEXP}$.

We prove the following further results:
\begin{itemize}
    \item \textit{Unconditionally}, $\mathsf{FP} \ne \mathsf{FBPP}$ and $\mathsf{FP/poly} \ne \mathsf{FBPP/poly}$ (even when these classes are carefully defined).
    \item $\mathsf{FBPP/poly} = \mathsf{FBPP/rpoly}$ (and likewise for $\mathsf{FBQP}$). \ For \textit{sampling} problems, by contrast, $\mathsf{SampBPP/poly} \ne \mathsf{SampBPP/rpoly}$ (and likewise for $\mathsf{SampBQP}$).
\end{itemize}
\end{abstract}

\newpage
\section{Introduction}

Here is a basic and underappreciated fact: there are computational problems---not distributed or cryptographic tasks, but just pure computational problems---that \textit{provably} admit only randomized solutions. \ One simple example is: ``output an $n$-bit Kolmogorov-random string.'' \ Another example is: ``given as input a halting Turing machine $M$, output any string \emph{other than} what $M$ outputs when run on its own description.''

Both of these are \textit{relational problems}, defined by a relation $R\subseteq \bin^* \times \bin^*$. \ Given an input $x$, the goal in such a problem is to output any $y$ such that $(x,y)\in R$. \ The class $\mathsf{FP}$ consists of all relations $R$ for which there exists a deterministic polynomial-time algorithm to find a $y$ such that $(x,y)\in R$ whenever one exists. \ (Ironically, the $\mathsf{F}$ stands for ``functional,'' even though the whole point with relational problems is that they need \emph{not} be functions.)

It is trickier to define $\mathsf{FBPP}$ and $\mathsf{FBQP}$, the relational analogues of $\mathsf{BPP}$ and $\mathsf{BQP}$ respectively. \ For unlike with decision problems, we can no longer amplify success probabilities by taking majorities, so different allowed error probabilities could lead to different complexity classes. \ For this reason, a wide variety of definitions of $\mathsf{FBPP}$ have appeared in the literature \cite{goldreich:pbpp,aar:samp,Man19-thesis,my23-adv-owfs,ilw23}. \ Having said that, there is one choice that seems more natural than others, which Aaronson \cite{aar:samp} made more than a decade ago and which we follow here.\footnote{We also found that GPT-4 \cite{gpt4}, when prompted to give a definition for $\mathsf{FBPP}$, settled on one similar to Definition \ref{FBPPdef}. See Appendix \ref{GPT} for the transcript.}

Call the relation $R\subseteq \bin^* \times \bin^*$ \textit{polynomially-bounded} if there exists a polynomial $p$ such that $|y| \le p(|x|)$ for all $(x,y)\in R$. \ Then:

\begin{definition}
\label{FBPPdef}
$\mathsf{FBPP}$ is the class of polynomially-bounded relations $R\subseteq \bin^* \times \bin^*$ for which there exists a polynomial-time randomized algorithm $A$ such that for all $x$ for which there exists a $y$ with $(x,y)\in R$ and all $\varepsilon > 0$,
$$ \Pr[(x,A(x,0^{1/\varepsilon}))\in R]\ge 1-\varepsilon,$$
where the probability is over $A$'s outputs. \ $\mathsf{FBQP}$ is exactly the same except that $A$ can now be a quantum algorithm.
\end{definition}

A few comments on this definition: we require $A$ to succeed for any given $\varepsilon>0$ in order to avoid problems being in $\mathsf{FBPP}$ or $\mathsf{FBQP}$ for ``accidental'' reasons, i.e.\ that the fraction of strings $y$ such that $(x,y)\not\in R$ happens to fall below some arbitrary threshold. \ We allow time polynomial in $1/\varepsilon$ because, as we'll see, there are natural reductions that need such time. \ We demand that $R$ be polynomially-bounded because otherwise, $A$ might achieve smaller and smaller error probabilities $\varepsilon$ by outputting longer and longer strings, rather than ``doing better and better on the same strings,'' which is not what we intuitively wanted when we allowed $\operatorname{poly}(n,1/\varepsilon)$ time. \ Finally, we do not require that membership in the relation be efficiently verifiable, in contrast to Goldreich's definition \cite[Definition 3.1]{goldreich:pbpp}. \ This is for fairness to quantum algorithms: we want $\mathsf{FBQP}$ to contain the relational analogues of problems like BosonSampling \cite{aark} and Random Circuit Sampling \cite{achen} that have played a central role in recently claimed demonstrations of quantum computational supremacy \cite{arute,ustc}.
\ However, it seems unlikely that such problems can admit efficient verification of membership in the relation.\footnote{For example, if the relation $R \in \mathsf{FBQP}$ defined in \cite[Corollary 5.10]{aark} had efficient verification of membership, then the Gaussian Permanent Estimation problem $|\mathrm{GPE}|^2_\pm$ \cite[Problem 1.2]{aark} would be solvable in $\mathsf{FP}^\mathsf{PH}$, thus refuting either the Permanent-of-Gaussians Conjecture \cite[Conjecture 1.5]{aark}, the Permanent Anti-Concentration Conjecture \cite[Conjecture 1.6]{aark}, or $\mathsf{P^{\# P}} \not \subset \mathsf{PH}$.}

Already with $\mathsf{FBPP}$ and $\mathsf{FBQP}$, some interesting phenomena rear their heads: for example, we'll observe in Section \ref{FBPPSEC} that $\mathsf{FP}\ne \mathsf{FBPP}$, unconditionally. \ Note that, because of the requirement to succeed with probability $1-\varepsilon$ for any $\varepsilon > 0$, this does \textit{not} immediately follow from the examples with which we opened the paper, but it does follow from modifications of those examples, involving time-bounded Kolmogorov complexity or the time-bounded halting problem.

The message of this paper is that the story of $\mathsf{FBPP}$ and $\mathsf{FBQP}$ becomes wilder still---even more divergent from expectations formed from decision problems---once we bring classical and quantum \textit{advice} into the picture.

\subsection{Advice Classes}

Karp and Lipton \cite{kl} introduced the nonuniform complexity class $\mathsf{P/poly}$ and proved the famous theorem that $\mathsf{NP}\subset \mathsf{P/poly}$ would imply the collapse of the polynomial hierarchy. \ Meanwhile, Adleman \cite{adleman:bpp} proved that $\mathsf{BPP}\subset \mathsf{P/poly}$. \ Indeed, it is not hard to see that
$$ \mathsf{BPP/rpoly} = \mathsf{BPP/poly} = \mathsf{P/rpoly} = \mathsf{P/poly},$$
where $\mathsf{/rpoly}$ means ``with polynomial-sized randomized advice,'' and $\mathsf{P/rpoly}$ is the class of languages that admit nonuniform polynomial-time bounded-error randomized algorithms in which the only randomness comes from the advice. Note that the  $\mathsf{/rpoly}$ advice is at least as powerful as the $\mathsf{/poly}$ advice as the $\mathsf{/poly}$ advice can be seen as a distribution with probability concentrated on a single string.

When we come to $\mathsf{BQP}$, it's natural to ask what happens when the advice can be a quantum state on polynomially many qubits---perhaps a highly entangled state that's intractable to prepare on one's own. \ To capture this question, in 2003 Nishimura and Yamakami \cite{ny} defined the class $\mathsf{BQP/qpoly}$, or Bounded-Error Quantum Polynomial-Time with polynomial-size quantum advice.

\begin{definition}
$\mathsf{BQP/qpoly}$ is the class of all languages $L\subseteq \bin^*$ for which there exists a polynomial-time quantum algorithm $A$, a polynomial $p$, and an infinite list of advice states $\{\ket{\psi_n}\}_{n\ge 1}$, where $\ket{\psi_n}$ is on $p(n)$ qubits, such that for all $n$ and all $x\in \bin^n$,
$$ \Pr[A(x,\ket{\psi_n})=L(x) ]\ge \frac{2}{3}.$$
\end{definition}

Studying $\mathsf{BQP/qpoly}$ is one way to formalize the old question of ``how much information is in an $n$-qubit state.'' \ On the one hand, if we think of an $n$-qubit state $\ket{\psi}$ as a unit vector in $\mathbb{C}^{2^n}$, then it seems $\ket{\psi}$ could provide an exponential amount of information---say, about every possible input $x\in\bin^n$. \ On the other hand, Holevo's Theorem \cite{holevo} implies that we can encode at most $n$ bits into $n$ qubits, in such a way that they can be reliably retrieved later by measuring them.

So then, does $\mathsf{BQP/qpoly}$ collapse with $\mathsf{BQP/poly}$---that is, $\mathsf{BQP}$ with polynomial-sized \textit{classical} advice---or could it be vastly more powerful?

A priori, it's not even obvious that $\mathsf{BQP/qpoly} \ne \mathsf{ALL}$, where $\mathsf{ALL}$ is the class of all languages. \ Underscoring this worry, it's easy to show (for example) that $\mathsf{PostBQP/qpoly}=\mathsf{ALL}$, where $\mathsf{PostBQP}$ means quantum polynomial time with postselected measurements. \ To see this, given a language $L$ and an input length $n$, we just need to consider the advice state
$$ \ket{\psi_n} = \frac{1}{\sqrt{2^n}} \sum_{z\in\bin^n} \ket{z} \ket{L(z)}, $$
where $L(z)=1$ if $z\in L$ and $L(z)=0$ otherwise. \ Then given an input $x\in\bin^n$, we first measure $\ket{\psi_n}$ in the standard basis, then postselect on getting the outcome $\ket{x} \ket{L(x)}$.




Despite the sometimes unsettling power of randomized and quantum advice, in 2004, Aaronson \cite{aar:adv} proved that $\mathsf{BQP/qpoly}\subseteq\mathsf{PostBQP/poly}$. \ Since Aaronson \cite{aar:pp} also showed that $\mathsf{PostBQP} = \mathsf{PP}$, and since adding deterministic advice ``commutes'' with standard complexity class inclusions, this can be stated equivalently as $\mathsf{BQP/qpoly}\subseteq\mathsf{PP/poly}$.

This upper bound on the power of $\mathsf{BQP/qpoly}$ has a few implications. \ First, it immediately implies that $\mathsf{BQP/qpoly} \ne \mathsf{ALL}$, since $\mathsf{PP/poly}\ne \mathsf{ALL}$ is easy to show by a counting argument. \ Second, it means that there is no hope, in the present state of complexity theory, of proving that $\mathsf{BQP/poly} \ne \mathsf{BQP/qpoly}$. \ For any such proof would imply $\mathsf{BQP/poly} \ne \mathsf{PP/poly}$, and hence (for example) that $\mathsf{PSPACE}$ does not have polynomial-size circuits. \ At best, one could hope to show that $\mathsf{BQP}^A\mathsf{/poly} \ne \mathsf{BQP}^A\mathsf{/qpoly}$ for some oracle $A$. \ As it happens, even \textit{this} is still open, although Aaronson and Kuperberg \cite{ak} showed the existence of a unitary oracle $U$ such that $\mathsf{BQP}^U\mathsf{/poly} \ne \mathsf{BQP}^U\mathsf{/qpoly}$, and there's been recent progress toward replacing this with an ordinary classical oracle \cite{feffermankimmel,natarajannirkhe}.

\subsection{Relational Complexity Classes with Advice}

In quantum computing, it has repeatedly been found that it's easier to see the advantages of quantum algorithms over classical ones once we switch attention from decision problems to relational and sampling problems. \ This is what happened, for example, with BosonSampling \cite{aark}, Random Circuit Sampling \cite{achen}, and other sampling-based approaches to demonstrating quantum supremacy. \ It is also what happened with the recent breakthrough of Yamakawa and Zhandry \cite{yz}, which achieved an exponential quantum speedup relative to a random oracle---but only by switching from decision problems (where the \textit{Aaronson-Ambainis Conjecture} \cite{aa:struc} asserts that no such separation is possible) to $\mathsf{NP}$ search problems, a particular kind of relational problem.\footnote{If one just wants a superpolynomial quantum speedup relative to a random oracle for \textit{some} relational problem---not necessarily an $\mathsf{NP}$ search problem---then Aaronson \cite{aar:ph} showed that the problem of outputting large Fourier coefficients of a random Boolean function does the job.}

In this paper, then, we do something that could've been done at any point in the past 20 years, but apparently wasn't: namely, we ask about the advantages of quantum over classical \textit{advice} on relational problems.

\begin{definition}
\label{fbqpqpolydef}
$\mathsf{FBQP/qpoly}$ is the class of polynomially-bounded relations $R\subseteq \bin^* \times \bin^*$ for which there exists a polynomial-time quantum algorithm $Q$, a polynomial $p(n,m)$, and an infinite list of advice states $\{\ket{\psi_{n,m}}\}_{n,m\ge 1}$, where $\ket{\psi_{n,m}}$ is on $p(n,m)$ qubits, such that for all $x$ for which there exists a $y$ such that $(x,y)\in R$ and all $m$,
$$ \Pr[(x,Q(x,0^m,\ket{\psi_{n,m}}))\in R]\ge 1-\frac{1}{m}.$$
\end{definition}

The one subtlety in this definition is that the advice state $\ket{\psi_{n,m}}$ is allowed to depend, not only on the input length $n$, but on the desired error probability $\varepsilon = 1/m$. \ We claim that this is simply the ``right'' choice: efficiency in this setting means time polynomial in $n$ and $1/\varepsilon$, so the advice ought to be allowed to depend on both parameters as well.

Of course, it doesn't make much sense to feed quantum advice to a classical complexity class (e.g., $\mathsf{FP/qpoly}$). \ On the other hand, it's sensible to consider $\mathsf{FP/rpoly}$: this corresponds to classical algorithms that \textit{only} get to use random bits if they come from the advice.

\begin{definition}
\label{fprpolydef}
$\mathsf{FP/rpoly}$ is the class of polynomially-bounded relations $R\subseteq \bin^* \times \bin^*$ for which there exists a polynomial-time deterministic classical algorithm $A$, a polynomial $p(n,m)$, and an infinite list of advice distributions $\{\mathcal{D}_{n,m}\}_{n,m\ge 1}$, where $\mathcal{D}_{n,m}$ is supported on $\bin^{p(n,m)}$, such that for all $x$ for which there exists a $y$ such that $(x,y)\in R$ and all $m$,
$$ \Pr_{r \sim \mathcal{D}_{n,m}}[(x,A(x,0^m,r))\in R]\ge 1-\frac{1}{m}.$$
\end{definition}

We can similarly define $\mathsf{FBPP/rpoly}$, $\mathsf{FBQP/poly}$, and other possible combinations; we omit the details.

\subsection{Our Results}

We show that switching attention to relational problems dramatically changes the picture of randomized and quantum computation in the presence of advice.

Our first result is that quantum advice \textit{unconditionally} provides more power than classical advice to solve relational problems:

\begin{restatable}{theorem}{qpolythmrestate}
\label{qpolythm}
$\mathsf{FBQP/qpoly} \ne \mathsf{FBQP/rpoly}$.
\end{restatable}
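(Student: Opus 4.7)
The plan is to adapt the Bar-Yossef--Jayram--Kerenidis (BJK) exponential separation for the Hidden Matching problem: Alice holds $x \in \bin^N$, Bob holds a perfect matching $M$ on $[N]$, and Bob must output $(i,j,b)$ with $(i,j) \in M$ and $b = x_i \oplus x_j$. Alice can do this by sending the $O(\log N)$-qubit state $\ket{\psi_x} = N^{-1/2} \sum_{z} (-1)^{x_z} \ket{z}$, whereas any classical one-way message requires $\Omega(\sqrt{N})$ bits.

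I would define a relational problem $R$ by setting $N := 2^n$, letting the $n$-bit input $s \in \bin^n \setminus \{0^n\}$ encode the ``shift'' matching $M_s = \{\{z, z \oplus s\} : z \in \bin^n\}$, and non-uniformly fixing a ``hard'' string $x^*_n \in \bin^N$ for each $n$. A triple $(i,j,b)$ is valid iff $\{i,j\} \in M_s$ and $b = x^*_n(i) \oplus x^*_n(j)$. To place $R \in \mathsf{FBQP/qpoly}$, the advice is $\ket{\psi_{x^*_n}}$: on input $s$, choose an $\mathbb{F}_2$-invertible linear map $T$ with $T(s) = e_1$, apply the permutation unitary $\ket{z} \mapsto \ket{T(z)}$, Hadamard the first qubit, and measure in the computational basis. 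Unwinding $T$ on the outcome yields a valid triple with probability $1$, so no amplification in $m$ is needed.

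For the lower bound $R \notin \mathsf{FBQP/rpoly}$, I would fix $x^*_n$ non-constructively (say, Kolmogorov-random) and reduce any hypothetical polynomial-time quantum algorithm $Q$ with $\operatorname{poly}(n)$-bit classical randomized advice to a one-way protocol for the shift-family Hidden Matching: Alice computes a sample $r_n$ from the advice distribution using $x^*_n$ and sends it classically to Bob, who runs $Q(s, r_n)$. After averaging to fix a deterministic $r_n$, this gives a polynomial-length classical-message protocol with a quantum receiver for Hidden Matching on $[2^n]$; a $\Omega(\sqrt{2^n}) = \Omega(2^{n/2})$ lower bound on the message length contradicts $\operatorname{poly}(n)$ advice.

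The crux is making the BJK-style lower bound go through here, since BJK proved theirs only for a classical receiver and for arbitrary matchings, whereas here Bob is a polynomial-time quantum algorithm and the matchings come from the restricted succinct family $\{M_s\}_{s \in \bin^n \setminus \{0^n\}}$. The most promising route is to combine Holevo's theorem with the BJK hybrid argument to obtain a classical-to-quantum one-way lower bound, and to confirm via an information-theoretic argument that for Kolmogorov-random $x^*_n$ no polynomial-size advice encodes enough XORs $x^*_n(z) \oplus x^*_n(z \oplus s)$ to cover all shift matchings. Failing that, one could substitute the Gavinsky--Kempe--Kerenidis--Raz--de Wolf Boolean Hidden Matching problem, whose classical lower bound is known to be robust against quantum receivers.
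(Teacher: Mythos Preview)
Your construction matches the paper's almost exactly: the paper's relation $R_F$ asks, on input $x\in\bin^n$, for $(y,b)$ with $b=f_n(y)\oplus f_n(y\oplus x)$; the quantum advice is the phase state $\ket{\psi_n}=2^{-n/2}\sum_y(-1)^{f_n(y)}\ket{y}$; and the paper reduces $\mathsf{/rpoly}$ to $\mathsf{/poly}$ first (Theorem~\ref{rpolypolyequal}) before invoking BJK. The paper takes each $f_n$ uniformly at random and shows that with probability $1$ over $F$ every fixed $\mathsf{FBQP/poly}$ algorithm fails on infinitely many $n$---cleaner than fixing a Kolmogorov-random string, but morally the same non-constructive step.

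The two difficulties you flag as ``the crux'' are, however, illusory. First, a quantum Bob who receives a \emph{classical} message is no stronger than a randomized Bob for the purposes of a distributional lower bound: his output distribution depends only on the pair (message, input), so by averaging there is a deterministic Bob doing at least as well against $\mu$, and Theorem~\ref{bjkthm_distributional} applies verbatim. No Holevo argument is needed; the paper explicitly remarks that the lower-bound proof ``nowhere needed the fact that the algorithm was an efficient quantum algorithm.'' Second, Theorem~\ref{bjkthm_distributional} is already stated for an arbitrary family of $\Omega(N)$ pairwise edge-disjoint perfect matchings, and the $N-1$ shift matchings $\{M_s\}_{s\neq 0^n}$ are exactly such a family---this is precisely how the paper invokes it.

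Finally, your proposed fallback to Boolean Hidden Matching would \emph{not} work, and the paper stresses exactly this point. By Theorem~\ref{aarccthm}, any Boolean $f$ with $m$-bit Bob input satisfies $D^1(f)=O(m\,Q^1(f)\log Q^1(f))$; an exponential gap for BHM therefore forces Bob's matching input to be large, so it cannot be compressed to $n=\log N$ bits. It is precisely the relational (non-Boolean) nature of Hidden Matching that lets Bob's input stay succinct while preserving the $\Omega(\sqrt N)$ classical lower bound---which is what makes the translation to an $\mathsf{FBQP/qpoly}$ versus $\mathsf{FBQP/rpoly}$ separation go through.
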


So in particular, $\mathsf{FBQP/qpoly} \ne \mathsf{FBQP/poly}$. \ Indeed, we shall see that $\mathsf{FBQP/qpoly}$ is not contained in $\mathsf{F}\mathcal{C}\mathsf{/poly}$ for arbitrarily powerful uniform complexity classes $\mathcal C$: for example, the class of all computable problems. \ This is despite the fact that $\mathsf{FBQP/qpoly}$ does \textit{not} equal $\mathsf{ALL}$---as can be seen, for example, by considering its restriction to Boolean-valued problems, where it coincides with $\mathsf{PromiseBQP/qpoly} \subseteq \mathsf{PromisePP/poly}$.

As we discuss in Section \ref{SUPREMACY}, this complexity class separation suggests the possibility of a near-term experiment, which would run an $\mathsf{FBQP/qpoly}$ protocol in order to check explicitly whether an entangled state of $n$ qubits (where, say, $n\approx 20$) encodes $\gg n$ bits of classical information. \ We hope further work will clarify whether such an experiment is feasible with current devices.

Theorem \ref{qpolythm} is nonconstructive, and does not give an explicit example of a relation in $\mathsf{FBQP/qpoly}$ but not $\mathsf{FBQP/rpoly}$ (not counting, e.g., the use of brute force to find the lexicographically first relation that works). \ We leave the ``explicitization'' of this separation as one of our central challenges.

Our second result shows that Adleman's Theorem \cite{adleman:bpp}, that $\mathsf{BPP} \subset \mathsf{P/poly}$, almost certainly does \textit{not} extend to relational problems:\footnote{After this manuscript first appeared, Ilango, Li, and Williams \cite{ilw23} implicitly established a conceptually similar result that $\mathsf{FBPP}\not\subset \mathsf{FP/poly}$ under plausible assumptions. They show a conditional lower bound for the range avoidance problem $\textsc{Avoid}$, which lies in $\mathsf{FBPP}$ (whenever the stretch is at least linear). Roughly, \cite[Theorem 28]{ilw23} shows that if subexponentially-secure indistinguishability obfuscation exits and $\mathsf{coNP}$ is not in $\mathsf{NP/poly}$ infinitely often, then $\textsc{Avoid} \not\in \mathsf{FP/poly}$. Comparatively, our result seems to weaken the assumption required to separate $\mathsf{FBPP}$ from $\mathsf{FP/poly}$, but only in the sense that non-collapse of $\mathsf{PH}$ is a better-tested assumption than the existence of indistinguishability obfuscation.}

\begin{restatable}{theorem}{fppolythmrestate}
If $\mathsf{FBPP}\subset \mathsf{FP/poly}$, then $\mathsf{PSPACE}\subset \mathsf{NP/poly}$ (and hence $\mathsf{PH}$ collapses).
\label{fppolythm}
\end{restatable}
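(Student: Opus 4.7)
The plan is to combine $\mathsf{IP}=\mathsf{PSPACE}$ with a time-bounded Kolmogorov complexity argument to produce, for each $\mathsf{PSPACE}$ language $L$, an $\mathsf{NP/poly}$ algorithm. Specifically, I would fix an interactive protocol for $L$ with polynomial-time verifier $V$, perfect completeness, and soundness error $\epsilon(n)$. The target is to find polynomial-size advice $a_n$ that plays the role of the verifier's randomness, so that $x\in L$ iff there exists a Merlin message sequence $w$ with $V(x,a_n,w)=1$.

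To obtain this advice, I would apply the hypothesis $\mathsf{FBPP}\subset\mathsf{FP/poly}$ to the following $\mathsf{FBPP}$ problem: given $(1^n,1^t)$, output $r\in\bin^{q(n)}$ with time-bounded Kolmogorov complexity $K^t(r)\ge q(n)-O(\log n)$. Since a uniformly random string satisfies this bound with overwhelming probability, the problem lies in $\mathsf{FBPP}$; the hypothesis then yields a deterministic polynomial-time algorithm producing such K-random strings $r_n$ with polynomial advice, which I would use as $a_n$. Completeness of the $\mathsf{NP/poly}$ algorithm is then immediate from perfect completeness of the underlying IP protocol.

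The heart of the argument is soundness: for $x\notin L$, no $w$ should satisfy $V(x,r_n,w)=1$. If some $w$ did, then applying soundness to the non-adaptive Merlin strategy that plays $w$ regardless of the verifier's coins shows that $S_{x,w}:=\{r:V(x,r,w)=1\}$ has density at most $\epsilon$. Hence $r_n\in S_{x,w}$ would admit a short description via $(x,w,\text{index in }S_{x,w})$, yielding $K^{t'}(r_n)\le |x|+|w|+q(n)+\log\epsilon+O(\log n)$, contradicting the K-randomness of $r_n$ provided $\log(1/\epsilon)$ exceeds $|x|+|w|$ by a suitable margin. The hardest part will be achieving this parameter regime, because naive amplification of the $\mathsf{IP}$ protocol grows $|w|$ in proportion to $\log(1/\epsilon)$; I expect the resolution to exploit specific structural features of Shamir's protocol for TQBF---particularly that Merlin's low-degree polynomial messages can be made short relative to the verifier's randomness by taking the underlying field sufficiently large---so that the Kolmogorov compression argument goes through. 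Once $\mathsf{PSPACE}\subset\mathsf{NP/poly}$ is established, $\mathsf{PH}$ collapses by a standard Karp--Lipton-style argument applied to the chain $\mathsf{PH}\subset\mathsf{PSPACE}\subset\mathsf{NP/poly}$.
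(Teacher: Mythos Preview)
Your high-level strategy---derandomize the $\mathsf{IP}=\mathsf{PSPACE}$ verifier via an $\mathsf{FBPP}$ problem about generating strings of high time-bounded Kolmogorov complexity---matches the paper's. But the specific implementation has a genuine gap, and the fix you propose does not close it.

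The obstacle you correctly flag (that $\log(1/\epsilon)$ must exceed $|x|+|w|$) is fatal for a \emph{global} compression argument, and enlarging the field in Shamir's protocol does not help. In each round the verifier sends one element of $\mathbb{F}_q$ (length $\log q$), while the prover sends a univariate polynomial of degree $\mathrm{poly}(n)$ over $\mathbb{F}_q$ (length $\mathrm{poly}(n)\cdot\log q$). Thus $|w|$ exceeds the verifier's total randomness by a $\mathrm{poly}(n)$ factor for \emph{every} choice of $q$, and your inequality $\log(1/\epsilon)>|x|+|w|$ can never hold. There is a second issue you did not address: your decoder recovers $r_n$ by enumerating $S_{x,w}$, which requires running $V$ on all $2^{q(n)}$ randomness strings; so the time bound $t$ in $K^t$ must be at least $2^{q(n)}$. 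But then the input $(1^n,1^t)$ to your $\mathsf{FBPP}$ relation has exponential length, and the $\mathsf{FP/poly}$ circuit you extract from the hypothesis is exponentially large---useless as $\mathsf{NP/poly}$ advice.

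The paper's resolution is to switch to \emph{conditional} time-bounded Kolmogorov complexity and argue \emph{round by round}. The $\mathsf{FBPP}$ relation is $R^*=\{((x,0^n),y):\operatorname{Kt}(y\mid x)\ge n/2\}$. At round $t$ of the sumcheck, instead of a uniformly random $r_t$, the verifier feeds the entire transcript so far into the $\mathsf{FP/poly}$ circuit for $R^*$ to obtain an $r_t$ with $\operatorname{Kt}(r_t\mid s,\phi,r_1,\ldots,r_{t-1},g_t)\ge 2n^2$. Because the prover's messages $g_1,\ldots,g_t$ sit in the \emph{conditioning}, they cost nothing in the compression bound---this is exactly what dissolves your parameter mismatch. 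Soundness then follows since at each round the ``bad'' values of $r_t$ (the roots of $g_t-h_t$) number at most $\mathrm{poly}(n)$ and can be listed in time $2^{T}n^{O(1)}$ given the transcript, so every bad $r_t$ has conditional $\operatorname{Kt}$-complexity at most $T+O(\log n)<2n^2$. This per-round conditioning is the missing idea.
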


We complement Theorem \ref{fppolythm} with a result showing that an unconditional proof of $\mathsf{FBPP} \not\subset \mathsf{FP/poly}$ is unlikely in the current state of complexity theory, as it would imply breakthrough circuit lower bounds:

\begin{restatable}{theorem}{fbppupperboundrestate}
\label{fbppupperboundthm}
$\mathsf{FBPP} \subseteq \mathsf{FP^{PromiseBPEXP}}$. Hence, if $\mathsf{PromiseBPEXP} \subset \mathsf{PromiseP/poly}$, then $\mathsf{FBPP} \subset \mathsf{FP/poly}$.
\end{restatable}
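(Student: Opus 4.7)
The second implication is routine: if $\mathsf{PromiseBPEXP} \subset \mathsf{PromiseP/poly}$, then each query made by an $\mathsf{FP}^{\mathsf{PromiseBPEXP}}$ machine is answered by a polynomial-size circuit with polynomial advice, and the advice strings for every possible query length (polynomially many) can be concatenated into a single $\mathsf{/poly}$ advice. This is safe even when some queries are out of promise, provided the $\mathsf{FP}^{\mathsf{PromiseBPEXP}}$ algorithm is robust to arbitrary answers on such queries --- which will hold in the construction below.

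The substantive content is $\mathsf{FBPP} \subseteq \mathsf{FP}^{\mathsf{PromiseBPEXP}}$. The plan is a bit-by-bit conditional-probability search, using the oracle to steer the output so that the conditional probability of $A$'s output extending the current prefix never drops by more than a constant factor per bit. Given $R \in \mathsf{FBPP}$ via algorithm $A$ with output-length bound $p$, on input $x$ set $\varepsilon := 5^{-p(|x|)}$ and $A'(x, r) := A(x, 0^{\lceil 1/\varepsilon \rceil}, r)$; this runs in time $2^{\Theta(p(|x|))}$ and outputs an invalid string with probability at most $\varepsilon$. Writing $P(z) := \Pr_r[A'(x, r) \text{ has output with prefix } z]$, I define the promise problem $\Pi$: \textbf{YES} when $P(z0) \geq \tfrac{5}{12} P(z)$, \textbf{NO} when $P(z0) \leq \tfrac{1}{4} P(z)$. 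Then $\Pi \in \mathsf{PromiseBPEXP}$, since a $\mathsf{BPEXP}$ machine can draw $2^{O(p(|x|))}$ independent samples from $A'$ and use them to estimate $P(z0)/P(z)$ within additive $\tfrac{1}{48}$.

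The deterministic machine initializes $y$ to the empty string and, for $p(|x|)$ steps, queries $\Pi$ on $(x, y)$, appending $0$ on \textbf{YES} and $1$ on \textbf{NO}. I would then verify the invariant $P(y_{\leq i}) \geq 4^{-i}$ by induction: in the \textbf{YES} case $P(y_{<i}\,0) \geq \tfrac{5}{12} P(y_{<i})$; in the \textbf{NO} case $P(y_{<i}\,1) \geq \tfrac{3}{4} P(y_{<i})$; and in the out-of-promise gap $\tfrac{1}{4} P(y_{<i}) < P(y_{<i}\,0) < \tfrac{5}{12} P(y_{<i})$, both extensions satisfy $P(y_{<i}\,b) > \tfrac{1}{4} P(y_{<i})$ (since $P(y_{<i}\,1) > \tfrac{7}{12} P(y_{<i})$), so any oracle answer is safe. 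After $p(|x|)$ steps, $P(y) \geq 4^{-p(|x|)} > \varepsilon$; since $A'$ outputs an invalid string with total probability at most $\varepsilon < P(y)$, the specific string $y$ must itself be valid, i.e.\ $(x, y) \in R$.

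The main obstacle is precisely the promise-design step: one needs the asymmetric thresholds $\tfrac{5}{12}$ versus $\tfrac{1}{4}$ (rather than $\tfrac{1}{2}$) and the strict inequality $\varepsilon = 5^{-p(n)} < 4^{-p(n)}$ so that adversarial out-of-promise answers cannot break the invariant, and so that $P(y) > \varepsilon$ at the end. A secondary concern is whether $\mathsf{BPEXP}$ really suffices to resolve conditional probabilities as small as $4^{-p(n)}$, but this is fine: on-path queries have $P(z) \geq 4^{-p(n)}$ by induction, $2^{O(p(n))}$ samples from $A'$ then yield the needed additive accuracy, and the oracle's behavior on off-path (out-of-promise) inputs is unconstrained anyway.
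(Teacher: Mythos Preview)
Your approach is essentially the paper's: a bit-by-bit prefix search, using a $\mathsf{PromiseBPEXP}$ oracle to maintain the invariant that the current prefix is output by $A$ with probability at least $c^{-i}$ after $i$ steps, with the final probability strictly exceeding the error bound so the resulting string must be valid. The paper uses absolute thresholds (YES if $\Pr[\text{prefix}=z1]\ge\tfrac{2}{3}\cdot 3^{-|z|}$, NO if $\le\tfrac{1}{3}\cdot 3^{-|z|}$) with invariant $3^{-i}$ and $\varepsilon=4^{-p(n)}$; you use ratio thresholds with invariant $4^{-i}$ and $\varepsilon=5^{-p(n)}$. The inductive case analyses are parallel.

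There is one genuine wrinkle, though. Your promise problem $\Pi$ as literally defined is \emph{not} in $\mathsf{PromiseBPEXP}$. An instance $(x,z)$ with, say, $P(z)=2^{-2^{|x|}}$ and $P(z0)=\tfrac{1}{2}P(z)$ is a YES instance under your definition, yet no $\mathsf{BPEXP}$ machine drawing $2^{\operatorname{poly}(|x|)}$ samples can estimate the ratio $P(z0)/P(z)$ when the denominator is doubly-exponentially small. (Worse, if $P(z)=0$ the instance lies in both YES and NO, so $\Pi$ is not even a well-formed promise problem.) Your final paragraph asserts that off-path inputs are out-of-promise, but under a ratio-based definition that is simply false. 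The fix is immediate: add the clause $P(z)\ge 4^{-|z|}$ to both the YES and NO conditions. Then every in-promise instance has a denominator large enough to estimate in $\mathsf{BPEXP}$, and by your invariant every on-path query still lands either in the promise or in the gap, so the $\mathsf{FP}$ algorithm and its correctness proof are unchanged. The paper's choice of absolute thresholds $\Theta(3^{-|z|})$ sidesteps this issue automatically, which is the one real advantage of its formulation over yours.
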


We also show that, when $\mathsf{FP}$ and $\mathsf{FBPP}$ are either both given advice or both \textit{not} given advice, the separation between them becomes unconditional:

\begin{restatable}{theorem}{fpneqfbpprestate}
$\mathsf{FP} \ne \mathsf{FBPP}$.
\label{fpneqfbpp}
\end{restatable}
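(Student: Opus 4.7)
The plan is to diagonalize against polynomial-time deterministic algorithms using a time-bounded ``output something different from what $M$ outputs'' relation, i.e., a time-bounded variant of the halting-problem example from the introduction.

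Define $R$ as follows: $(x,y)\in R$ iff $x$ parses as $\langle M, 1^n\rangle$ for some (code of a) deterministic Turing machine $M$ and some integer $n\ge 1$, $|y|=n$, and the simulation of $M$ on input $x$ for $2^n$ steps either fails to halt or outputs some string other than $y$. This $R$ is polynomially bounded ($|y|=n\le |x|$), and for every $x$ in its domain there are at least $2^n-1\ge 1$ valid outputs.

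To show $R\notin\mathsf{FP}$, suppose for contradiction that $A$ is a deterministic polynomial-time algorithm solving $R$. For all sufficiently large $n$, the running time of $A$ on $x=\langle A,1^n\rangle$ is below $2^n$, so the simulation in the definition of $R$ completes the run of $A$ and produces $y_0:=A(x)$. But membership $(x,y_0)\in R$ would then require $y_0\ne y_0$, a contradiction; and $A$ must succeed on $x$ since $R_x\ne\emptyset$.

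To show $R\in\mathsf{FBPP}$, I would give the following algorithm on input $(x,0^m)$ with $x=\langle M,1^n\rangle$: if $n\ge \log_2 m$, output a uniformly random $y\in\bin^n$, which lies outside $R_x$ with probability at most $2^{-n}\le 1/m$; otherwise $2^n<m$, so one can afford to explicitly simulate $M$ on $x$ for $2^n$ steps in time $\mathrm{poly}(|x|,m)$ and output any $y\in\bin^n$ different from the resulting output string (which exists since $n\ge 1$). The main subtlety is calibrating the simulation bound inside $R$: it must be super-polynomial in $n$ so that any polynomial-time contender runs to completion on the diagonalization input $\langle A,1^n\rangle$, yet the $\mathsf{FBPP}$ algorithm must still be able to brute-force $M(x)$ exactly in the regime where the desired error $1/m$ drops below $2^{-n}$; the choice $2^n$ threads this needle, since $n<\log_2 m$ forces the simulation to take fewer than $m$ steps.
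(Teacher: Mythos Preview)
Your proof is correct and takes a genuinely different route from the paper's. The paper instead uses Levin's time-bounded Kolmogorov complexity $\operatorname{Kt}$: it sets $R = \{(x,y) : |x|=|y|,\ \operatorname{Kt}(y) \ge |y|/2\}$, shows $R \in \mathsf{FBPP}$ by outputting a uniformly random $n$-bit string when $\varepsilon \ge 2^{-n/2}$ and brute-forcing the lexicographically first high-$\operatorname{Kt}$ string when $\varepsilon < 2^{-n/2}$ (exponential in $n$, hence polynomial in $1/\varepsilon$), and shows $R \notin \mathsf{FP}$ because any fixed poly-time algorithm's output on $0^n$ has $\operatorname{Kt}$ only $O(\log n)$. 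Your direct diagonalization is more elementary---no Kolmogorov complexity needed---and your brute-force branch is cleaner (simulate $M$ once and dodge its output, rather than enumerate short programs and their running times). On the other hand, the paper's $\operatorname{Kt}$ formulation is the one that scales up: the \emph{conditional} version of the same relation is exactly what drives Theorem~\ref{fppolythm} ($\mathsf{FBPP} \subset \mathsf{FP/poly} \Rightarrow \mathsf{PSPACE} \subset \mathsf{NP/poly}$), where one needs a single relation witnessing hardness against \emph{all} circuits simultaneously rather than diagonalizing against one machine at a time. The paper itself acknowledges in a footnote that a direct diagonalization of your type is a valid alternative for the present theorem.
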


\begin{restatable}{theorem}{fpneqfbpppolyrestate}
$\mathsf{FP/poly} \ne \mathsf{FBPP/poly}$.
\label{fpneqfbpppoly}
\end{restatable}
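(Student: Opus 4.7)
The plan is to exhibit a relation $R \in \mathsf{FBPP/poly} \setminus \mathsf{FP/poly}$ by a diagonalization argument adapting the time-bounded Kolmogorov-complexity construction underlying Theorem~\ref{fpneqfbpp} to the advice setting. The basic template: valid outputs of $R$ must satisfy a lower bound of the form $K^t(y \mid x) \ge |y|/2$ for an appropriate polynomial-time bound $t$; a uniformly random output of the required length meets this bound with probability $1 - 2^{-\Omega(|y|)}$, while any deterministic procedure has output whose Kolmogorov complexity is bounded by the description length of the procedure together with its advice.

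To show $R \in \mathsf{FBPP/poly}$, I would just verify the stronger statement $R \in \mathsf{FBPP}$: the ``output a uniformly random string of length $\ell(|x|,m)$'' algorithm succeeds with probability $\ge 1 - 1/m$ as long as $\ell$ is chosen so that $2^{-\ell/2} \le 1/m$, which is polynomial in $|x|$ and $m$.

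For $R \notin \mathsf{FP/poly}$, I would diagonalize over polynomial advice bounds by encoding a ``level'' parameter $c$ in the input (for instance, as $\langle 1^n, 1^c \rangle$ together with suitable padding), and calibrating the required output length at level $c$ to exceed the total description complexity $|A'| + p(N,m) + O(\log N)$ of any candidate $\mathsf{FP/poly}$ solver whose advice polynomial $p$ has degree smaller than $c$. On diagonalization inputs at level $c > d$ (where $d$ is the degree of a given adversary's advice polynomial), the conditional Kolmogorov complexity $K^t(A'(x, 0^m, a_{N,m}) \mid x)$ would be forced strictly below $|y|/2$, contradicting $R$'s defining inequality.

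The main obstacle---mirroring the difficulty behind Theorem~\ref{fppolythm}---is the tight calibration among input length, padding, output length, and polynomial advice degree. Since the advice bound in $\mathsf{FP/poly}$ is evaluated at the padded input length~$N$, naively padding the input to make a long output polynomially bounded also inflates the advice budget $p(N,m)$, potentially absorbing the Kolmogorov threshold $|y|/2$. The technical heart of the argument will be to arrange the level encoding and padding so that, for each fixed polynomial advice bound, diagonalization inputs still exist on which the Kolmogorov argument survives---exploiting the asymmetry between how padding grows the available output length versus how it grows the advice polynomial evaluated at the corresponding input length.
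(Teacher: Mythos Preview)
Your plan has a fundamental gap. You propose to place the separating relation in $\mathsf{FBPP}$ (not merely $\mathsf{FBPP/poly}$) and then show it lies outside $\mathsf{FP/poly}$. But that combination would establish $\mathsf{FBPP} \not\subset \mathsf{FP/poly}$ \emph{unconditionally}, which by Theorem~\ref{fbppupperboundthm} would yield $\mathsf{PromiseBPEXP} \not\subset \mathsf{PromiseP/poly}$ --- a breakthrough circuit lower bound. Indeed, your relation is essentially the $R^*$ from the proof of Theorem~\ref{fppolythm}, where only a conditional conclusion is obtained. The diagonalization over advice degrees does not escape this barrier, because the ``asymmetry'' you hope to exploit is illusory. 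Since $R$ must be polynomially bounded, the output length is at most $q(N)$ for one \emph{fixed} polynomial $q$ in the input length $N$, while an $\mathsf{FP/poly}$ adversary may take advice of length $p(N)$ for an \emph{arbitrary} polynomial $p$. Padding the input to grow the output budget grows the advice budget by exactly the same mechanism; once $p(N) \ge q(N)/2$, the bound $K^t(C(x,a_N)\mid x)\le |a_N|+|C|+O(\log N)$ tells you nothing, and no level encoding changes that.

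The paper's proof sidesteps Kolmogorov complexity entirely. It fixes $R$ nonconstructively via uniformly random ``bad'' sets $S_x\subset\{0,1\}^n$ of size $2^{n/2}$ (valid outputs are the $y\notin S_x$), and obtains $R\notin\mathsf{FP/poly}$ by a union bound over circuits: any fixed circuit avoids every $S_x$ with probability only $(1-2^{-n/2})^{2^n}$. Crucially, the argument for $R\in\mathsf{FBPP/poly}$ \emph{uses the advice} --- this is the idea your proposal is missing. When $\varepsilon<2^{-n/2}$, the advice may have length $\operatorname{poly}(n,1/\varepsilon)\ge n\cdot 2^n$, enough to simply tabulate a valid $y$ for every $x\in\{0,1\}^n$. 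This $R$ is not in $\mathsf{FBPP}$, consistent with the barrier above.
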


This underscores yet another difference between decision and relational problems: if $\mathcal{C}_1$ and $\mathcal{C}_2$ are two uniform classes of promise problems, then the question of whether $\mathcal{C}_1 \subset \mathcal{C}_2\mathsf{/poly}$ is equivalent to the question of whether $\mathcal{C}_1\mathsf{/poly} \subseteq \mathcal{C}_2\mathsf{/poly}$, since an advice string can just be appended to the input. \ With relational complexity classes such as $\mathsf{FBPP}$, however, this equivalence is no longer immediate, since it doesn't account for how the length of the advice can depend on the error bound $\varepsilon$.

A last question is whether our separation between classical and quantum advice, in the relational setting, extends to a separation between deterministic and randomized advice. \ We show that the answer is no:

\begin{restatable}{theorem}{rpolypolyequalrestate}
$\mathsf{FBPP/rpoly}=\mathsf{FBPP/poly} = \mathsf{FP/rpoly}$ and $\mathsf{FBQP/rpoly}=\mathsf{FBQP/poly}$.
\label{rpolypolyequal}
\end{restatable}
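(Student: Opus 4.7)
The plan is to close a triangle of classical containments and then reuse the main ingredient for the quantum statement. Two of the three classical inclusions are immediate: $\mathsf{FP/rpoly} \subseteq \mathsf{FBPP/rpoly}$ since a deterministic algorithm is a trivial randomized one, and $\mathsf{FBPP/poly} \subseteq \mathsf{FBPP/rpoly}$ since point masses are legal randomized-advice distributions. What remains is $\mathsf{FBPP/rpoly} \subseteq \mathsf{FP/rpoly}$, $\mathsf{FP/rpoly} \subseteq \mathsf{FBPP/poly}$, and the quantum analogue $\mathsf{FBQP/rpoly} \subseteq \mathsf{FBQP/poly}$.

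First I would dispose of $\mathsf{FBPP/rpoly} \subseteq \mathsf{FP/rpoly}$ by absorbing the algorithm's internal coins into the advice. Given an $\mathsf{FBPP/rpoly}$ protocol $(A, \mathcal{D}_{n,m})$ with running time $t(n,m) = \mathrm{poly}(n,m)$, let the new advice distribution sample $(r,s)$ with $r \sim \mathcal{D}_{n,m}$ and $s$ uniform on $\bin^{t(n,m)}$, and let the new deterministic algorithm simulate $A$ on $(x, 0^m, r)$ using $s$ as its coin string. The per-input success probability is unchanged, and the advice size is still polynomial in $n$ and $m$.

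The central step is $\mathsf{FP/rpoly} \subseteq \mathsf{FBPP/poly}$. The obvious attempt---pick one $r^* \in \mathrm{supp}(\mathcal{D}_{n,m'})$ good for every input---runs into trouble: Markov's inequality only gives $\Pr_r[r \text{ is bad for } x] \leq m/m'$, which forces $m' > 2^n m$ (and hence exponential-size advice) to survive a union bound over $\bin^n$. The fix is to amplify before fixing. I would concatenate $k = \Theta(mn)$ independent draws $r_1,\dots,r_k \sim \mathcal{D}_{n,2m}$ into a single deterministic advice string, and have the new $\mathsf{FBPP/poly}$ algorithm use its own randomness to pick a uniform $i \in [k]$ and run the base protocol with advice $r_i$ and slack parameter $2m$. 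For each $x$, the empirical failure rate $\frac{1}{k}\sum_{i=1}^k \mathbb{1}[A(x,0^{2m},r_i) \notin R]$ is a sum of Bernoullis with mean at most $1/(2m)$, so a multiplicative Chernoff bound shows that it exceeds $1/m$ with probability at most $2^{-\Omega(k/m)} \leq 2^{-n-1}$; union-bounding over $x \in \bin^n$ then produces a tuple simultaneously good for every input, which we fix as the deterministic advice of total length $k \cdot \mathrm{poly}(n,m) = \mathrm{poly}(n,m)$.

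For $\mathsf{FBQP/rpoly} \subseteq \mathsf{FBQP/poly}$ I would reuse the same sampling-and-fixing recipe: a polynomial-time quantum algorithm has no trouble preparing a uniform $i \in [k]$ via Hadamards and then running the base protocol on $r_i$. The only adjustment is that $p_x(r) := \Pr_Q[(x, Q(x,0^{2m},r)) \notin R]$ now lies in $[0,1]$ rather than $\{0,1\}$, but the Chernoff--Hoeffding tail bound for bounded i.i.d.\ random variables gives the same $2^{-\Omega(k/m)}$ deviation, so $k = \Theta(mn)$ still suffices. The main obstacle throughout is the one flagged above: avoiding the exponential-in-$n$ blowup that a naive single-advice Markov argument would incur, which the $k$-fold amplification defeats cleanly.
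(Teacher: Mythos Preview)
Your proof is correct and follows essentially the same approach as the paper: fix polynomially many independent samples from the advice distribution as deterministic advice, have the algorithm pick one uniformly at random, and use a concentration-plus-union-bound argument to show a good tuple exists. The only cosmetic differences are organizational (you close a triangle through $\mathsf{FP/rpoly}$ while the paper shows $\mathsf{FBPP/rpoly}\subseteq\mathsf{FBPP/poly}$ directly) and in the tail bound used (your multiplicative Chernoff with $k=\Theta(mn)$ versus the paper's Hoeffding with $k=\Theta(n/\varepsilon^2)$).
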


See Figure \ref{classfig} for the complexity class inclusion diagram that emerges from our results about relational classes.

\begin{figure}
\centering
\begin{tikzpicture}[x=5em,y=5em]
    \node at (0,3) {$\mathsf{FBQP/qpoly}$};
    \draw[thick,->] (0,2.25) -- (0,2.75);
    \node at (0.8,2.5) {Theorem \ref{qpolythm}};
    \node at (0,2) {$\mathsf{FBQP/rpoly} = \mathsf{FBQP/poly}$};
    \node at (0,1.75) {Theorem \ref{rpolypolyequal}};
    \draw[thick,dashed,->] (0,1) -- (0,1.5);
    \node at (0,0.75) {$\mathsf{FBPP/rpoly}=\mathsf{FBPP/poly} = \mathsf{FP/rpoly}$ };
    \node at (0,0.5) {Theorem \ref{rpolypolyequal}};
    \draw[thick,->] (0,-0.25) -- (0,0.25);
    \node at (0.8,0) {Theorem \ref{fpneqfbpppoly}};
    \node at (0,-0.5) {$\mathsf{FP/poly}$ };

    \node at (-3,0.75) {$\mathsf{FP^{PromiseBPEXP}}$};
    \draw[thick,dashed,->] (-2.1,-0.25) -- (-3,0.6);
    \node at (-3.2,0) {Theorem \ref{fbppupperboundthm}};
    \node at (-2,-0.5) {$\mathsf{FBPP}$};
    \draw[thick,->] (-2,-1.25) -- (-2,-0.75);
    \node at (-2.8,-1) {Theorem \ref{fpneqfbpp}};
    \node at (-2,-1.5) {$\mathsf{FP}$};
    \draw[thick,->] (-1.9,-0.25) -- (-0.7,0.5);
    \draw[thick,->] (-1.7,-1.4) -- (-0.5,-0.65);
    \draw[dashed,thick,->] (-1.6,-0.5) -- (-0.5,-0.5);
    \node at (-1,-0.25) {Theorem \ref{fppolythm}};
    \draw[thick,red] (-1.1,-0.6) -- (-0.9,-0.4);
    \draw[thick,red] (-0.9,-0.6) -- (-1.1,-0.4);
\end{tikzpicture}
    \caption{\label{classfig} Relationships among classes of relational problems considered in this paper. \ A solid arrow from $\mathcal{C}_1$ to $\mathcal{C}_2$ indicates strict containment $(\mathcal{C}_1 \subsetneq \mathcal{C}_2)$. \ A dashed arrow indicates a containment $\mathcal{C}_1 \subseteq \mathcal{C}_2$
    that we conjecture to be strict, but a proof of strictness would require a breakthrough in complexity theory. \ A crossed dashed arrow indicates non-containment $(\mathcal{C}_1 \not\subset \mathcal{C}_2)$ under the assumption that $\mathsf{PH}$ does not collapse.}
\end{figure}

We remark that several of our results are sensitive to the choices we made in defining $\mathsf{FBPP}$ and its variants, especially in regards to error reduction. \ In Section \ref{SMALLERROR}, we explore the consequences of choosing some alternative error bounds in Definition \ref{FBPPdef}. \ There, we find that Theorems \ref{fppolythm}, \ref{fpneqfbpp}, and \ref{fpneqfbpppoly}
no longer hold unconditionally if we demand exponential error reduction, meaning that the algorithm outputs a sample consistent with the relation with probability $1 - \varepsilon$ in time \textit{polylogarithmic}  in $1/\varepsilon$.\footnote{By contrast, Theorem \ref{qpolythm} is unaffected by such a change in definition, because the $\mathsf{FBQP/qpoly}$ algorithm used in our proof will turn out to be errorless.} \ So, our results could be interpreted in two different ways: either as showing a striking contrast between relational and decisional classes, or as showing the remarkable power of $\mathsf{FBPP}$ when we don't demand exponential error reduction. \ We leave it to the reader to decide, and hope that this work inspires more discussion about subtleties in the definitions.

\subsection{Quantum Communication Complexity}

As it turns out, essentially everything we need to prove Theorem \ref{qpolythm} was proved $20$ years ago, by Bar-Yossef, Jayram, and Kerenidis \cite{bjk}---though the fact that this is so is buried in their paper. \ These authors considered separations between randomized and quantum \textit{one-way communication complexities}. \ That is, they considered the setting where Alice has an input $x$, Bob has an input $y$, and Alice can send a message $m_x$ to Bob, which should then allow Bob to compute some joint property of $x$ and $y$.

Let $T$ be a task, which might be the evaluation of a Boolean function $f(x,y)$, but might also be a sampling or relational problem. \ We define $D^1(T)$, $R^1(T)$, and $Q^1(T)$ to be the minimum number of bits sent from Alice to Bob in any deterministic, bounded-error randomized, or bounded-error quantum one-way communication protocol respectively that lets Bob perform the task for all valid input pairs $(x,y)$ (with the number of bits maximized over all such input pairs). \ We assume no shared randomness or entanglement.

Clearly $D^1(T) \ge R^1(T) \ge Q^1(T)$ for all tasks $T$. \ A natural question is how large the separations between the measures can be. \ It's well-known that $D^1$ and $R^1$ can be exponentially separated: for example, for the $N$-bit EQUALITY function $\operatorname{EQ}$, we have $D^1(\operatorname{EQ})=N$ while $R^1(\operatorname{EQ})=O(\log N)$. \ But what about $R^1$ versus $Q^1$?

To study this, Bar-Yossef, Jayram, and Kerenidis \cite{bjk} defined a relation problem called \textit{Hidden Matching} or $\operatorname{HM}$. \ Here Alice is given a string $x\in\bin^N$ (with $N$ even), while Bob is given a perfect matching $y$ on the set $[N]$, consisting of $N/2$ edges. \ Bob's goal is to output $(i,j,x_i\oplus x_j)$ for \textit{some} edge $(i,j)\in y$. \ The key result is then the following:

\begin{theorem}[\cite{bjk}]
\label{bjkthm}
$Q^1(\operatorname{HM})=O(\log N)$, whereas $R^1(\operatorname{HM})=\Omega(\sqrt N)$.
\end{theorem}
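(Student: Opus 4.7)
I would prove the two bounds separately.

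\emph{Upper bound.} Alice encodes $x\in\bin^N$ into the $\lceil\log_2 N\rceil$-qubit phase state $\ket{\psi_x}=\frac{1}{\sqrt{N}}\sum_{i=1}^N(-1)^{x_i}\ket{i}$ and sends it to Bob. Because $y$ is a perfect matching, the vectors $\{\frac{1}{\sqrt 2}(\ket{i}+(-1)^b\ket{j}) : (i,j)\in y,\ b\in\bin\}$ form an orthonormal basis of $\mathbb{C}^N$. A one-line amplitude calculation shows that the outcome $\frac{1}{\sqrt 2}(\ket{i}+(-1)^b\ket{j})$ has probability exactly $2/N$ when $b=x_i\oplus x_j$ and probability $0$ otherwise, so measuring $\ket{\psi_x}$ in this basis deterministically produces a valid triple $(i,j,x_i\oplus x_j)$.

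\emph{Lower bound.} By Yao's minimax principle, it suffices to exhibit an input distribution on which every deterministic protocol with $c$-bit messages succeeds with probability at most $2/3$ unless $c=\Omega(\sqrt N)$; I take $x$ and $y$ independently uniform. Let $M\colon\bin^N\to\bin^c$ be the message function, $S_m=M^{-1}(m)$ its fibers, and $\epsilon_{m,(i,j)}=\mathbb{E}_{x\in S_m}[(-1)^{x_i\oplus x_j}]$ the pair-bias. Given $(m,y)$, the best deterministic Bob picks the edge $e(m,y)\in y$ maximizing $|\epsilon_{m,(i,j)}|$ and outputs its majority value, achieving conditional success probability $(1+|\epsilon_{m,e(m,y)}|)/2$. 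So it suffices to show that the overall advantage $\mathbb{E}_{m,y}[|\epsilon_{m,e(m,y)}|]$ is $o(1)$ whenever $c=o(\sqrt N)$. By Cauchy--Schwarz and the fact that each pair lies in a uniformly random perfect matching with probability $1/(N-1)$, this advantage is at most $\sqrt{\mathbb{E}_m[\sum_{(i,j)}\epsilon_{m,(i,j)}^2]/(N-1)}$.

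\emph{Main obstacle.} Everything reduces to showing $\mathbb{E}_m[\sum_{(i,j)}\epsilon_{m,(i,j)}^2]=O(c^2)$. A naive application of Parseval gives only $O(2^c)$, which yields the weaker bound $c=\Omega(\log N)$. To reach $\Omega(\sqrt N)$ I would invoke the \emph{level-$2$ Fourier inequality} (a consequence of Chang's Lemma / hypercontractivity): for any set $S\subseteq\bin^N$ of density $p$, $\sum_{|T|=2}\widehat{\mathbf{1}_S}(T)^2=O(p^2\log^2(1/p))$, which crucially exploits the Boolean-valuedness of $\mathbf{1}_S$. Converting back to $\epsilon$'s gives $\sum_{(i,j)}\epsilon_{m,(i,j)}^2=O(\log^2(1/p_m))$ per fiber, and averaging over $m$ weighted by $p_m=|S_m|/2^N$ --- after splitting off a negligible mass of atypical $m$ with $p_m\ll 2^{-c}$ --- yields the desired $O(c^2)$. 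The hardest part of the argument is therefore importing and correctly normalizing the level-$2$ inequality; without its Boolean-specific improvement over Parseval, one obtains only the suboptimal logarithmic bound.
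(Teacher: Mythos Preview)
The paper does not give its own proof of this theorem; it is quoted from Bar-Yossef, Jayram, and Kerenidis \cite{bjk}. Your proposal is correct and essentially reproduces the argument from that source. The upper bound via the phase state $\ket{\psi_x}$ is precisely the protocol the paper reuses in its proof of Theorem~\ref{qpolythm} (with the advice state $\ket{\psi_n}$ playing the role of Alice's message). For the lower bound, the paper invokes the sharper distributional statement recorded as Theorem~\ref{bjkthm_distributional}, in which Bob's matching is uniform over a fixed family $\mathcal M$ of $\Omega(N)$ pairwise \emph{edge-disjoint} perfect matchings rather than over all perfect matchings as in your sketch; either hard distribution yields the $\Omega(\sqrt N)$ bound via the same Fourier route, but the edge-disjoint version is what the paper needs downstream so that Bob's input can be named by an $n$-bit string. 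Your identification of the level-$2$ Fourier inequality (the KKL/Chang/hypercontractivity bound on the weight-$2$ Fourier mass of an indicator) as the decisive ingredient, together with the truncation of atypically light message fibers, matches the structure of the proof in~\cite{bjk}.
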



Crucially for us, Bar-Yossef, Jayram, and Kerenidis actually proved the following stronger statement:

\begin{theorem}[{\cite[Proof of Theorem 4.1, page 373]{bjk}}]
\label{bjkthm_distributional}
Let $\mathcal M$ be any set of perfect matchings on $[N]$ that is pairwise edge-disjoint and satisfies $|\mathcal{M}| = \Omega(N)$. \ Let $\mu$ be the distribution over inputs to $\operatorname{HM}$ in which Alice's input is uniform in $\{0,1\}^N$ and Bob's input is uniform in $\mathcal{M}$. \ Then, any deterministic one-way protocol for $\operatorname{HM}$ that errs with probability at most $1/8$ with respect to $\mu$ requires $\Omega(\sqrt{N})$ bits of communication.
\end{theorem}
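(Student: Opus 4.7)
The plan is to analyze deterministic protocols directly by exploiting that Alice's input under $\mu$ is uniform on $\{0,1\}^N$. Any $c$-bit deterministic one-way protocol is specified by a partition $\{S_m\}_m$ of $\{0,1\}^N$ into at most $2^c$ preimages (indexed by the messages $m$), together with Bob's response rule $g_m(M) = (i,j,b)$ with $(i,j) \in M$. Let $p_m := |S_m|/2^N$. Conditioned on Alice sending $m$, her input is uniform on $S_m$, so Bob's optimal strategy picks the edge $(i,j) \in M$ maximizing the magnitude of the bias
\[
b_{\{i,j\}}(S_m) \;:=\; \frac{1}{|S_m|} \sum_{x \in S_m} (-1)^{x_i + x_j},
\]
with the response bit chosen to match the sign; this gives conditional success probability $\tfrac{1}{2} + \tfrac{1}{2}\max_{(i,j)\in M}|b_{\{i,j\}}(S_m)|$. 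Rearranging the error hypothesis then yields
\[
\sum_m p_m \cdot \mathbb{E}_{M \sim \mathcal{M}}\!\left[\max_{(i,j) \in M} |b_{\{i,j\}}(S_m)|\right] \;\geq\; \tfrac{3}{4}.
\]

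The principal tool is Fourier analysis on the Boolean cube. Applying Parseval to the density $\mathbf{1}_{S_m}/p_m$ gives $\sum_A b_A(S_m)^2 = 1/p_m$, and hence in particular $\sum_{|A|=2} b_A(S_m)^2 \leq 1/p_m$. The edge-disjointness assumption on $\mathcal{M}$ is essential: each pair $(i,j)$ lies in at most one matching of $\mathcal{M}$, so summing Parseval across the matchings yields
\[
\sum_{M \in \mathcal{M}} \sum_{(i,j) \in M} b_{\{i,j\}}(S_m)^2 \;\leq\; \sum_{|A|=2} b_A(S_m)^2 \;\leq\; \frac{1}{p_m}.
\]
To convert this $\ell_2$ bound into control of the \emph{expected maximum} bias, I would count, for each threshold $\delta > 0$, the ``heavy'' pairs $\{(i,j) : |b_{\{i,j\}}(S_m)| > \delta\}$ (of which there are at most $1/(\delta^2 p_m)$ by Parseval), union-bound the probability that a uniformly random $M \sim \mathcal{M}$ hits one of them (at most $1/|\mathcal{M}|$ per pair), and integrate the resulting tail bound against $\delta$.

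Finally, aggregating these per-$m$ tail estimates via Cauchy--Schwarz over the at most $2^c$ messages and optimizing $\delta$ as a function of $p_m$ and $|\mathcal{M}|$, one collapses everything into a single inequality that, after rearrangement, forces $c = \Omega(\sqrt{N})$. The hard part is precisely this last quantitative step: a blunt application of $\max \leq \ell_2$ followed by Cauchy--Schwarz only yields $c = \Omega(\log N)$, and recovering the full $\Omega(\sqrt{N})$ bound requires balancing the heavy-pair counts at multiple scales against the concentration of the $p_m$'s, exactly as in the proof of \cite[Theorem~4.1]{bjk}. The novelty of the present statement over the original theorem is purely bookkeeping: the BJK argument, when inspected, never uses anything about Alice's or Bob's distribution beyond what the hypothesis on $\mu$ provides, so it gives a distributional lower bound over $\mu$ and not merely a min-max one.
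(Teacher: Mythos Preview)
Your overall architecture---partition by messages, reduce Bob's optimal play to the maximum edge bias, exploit edge-disjointness to spread the degree-$2$ Fourier mass across the matchings---is exactly the skeleton of the Bar-Yossef--Jayram--Kerenidis argument, and your closing remark (that the BJK proof is already distributional over $\mu$) is the correct reason this statement can be extracted from~\cite{bjk}.

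There is, however, a real gap in the analytic core. You propose to bound the number of heavy pairs $\{(i,j):|b_{\{i,j\}}(S_m)|>\delta\}$ by $1/(\delta^2 p_m)$ using Parseval alone. Carrying your outline through (tail-integrate to get $\mathbb{E}_M[\max]\le O(1/\sqrt{p_m|\mathcal{M}|})$, then Cauchy--Schwarz over messages) yields $\sum_m\sqrt{p_m}\ge\Omega(\sqrt{|\mathcal{M}|})$ and hence only $c=\Omega(\log N)$, exactly as you note. The problem is your diagnosis of the fix: it is \emph{not} a matter of ``balancing the heavy-pair counts at multiple scales against the concentration of the $p_m$'s.'' Parseval gives level-$2$ mass $\le 1/p_m$, and no amount of multi-scale bookkeeping can extract $\sqrt{N}$ from a bound that is exponential in $\log(1/p_m)$; the Cauchy--Schwarz step is already tight when the $p_m$'s are equal.

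What BJK actually invoke is a hypercontractivity/KKL-type level-$2$ inequality: for an indicator of a set of density $p_m$, the level-$2$ Fourier weight of the normalized density is $O\bigl((\log(1/p_m))^2\bigr)$, not merely $1/p_m$. This sharpens the heavy-pair count to $O\bigl((\log(1/p_m))^2/\delta^2\bigr)$, after which your tail integration gives $\mathbb{E}_M[\max]\le O\bigl(\log(1/p_m)/\sqrt{|\mathcal{M}|}\bigr)$, and summing yields $\sum_m p_m\log(1/p_m)\ge\Omega(\sqrt{N})$. The left side is the Shannon entropy of Alice's message, bounded by $c$, and the $\Omega(\sqrt{N})$ bound follows. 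So the missing ingredient is not combinatorial but analytic: you need the KKL/hypercontractive bound on degree-$2$ Fourier coefficients, which is precisely the lemma driving~\cite[Theorem~4.1]{bjk}.
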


To prove Theorem \ref{qpolythm}, in Section \ref{MAIN}, we adapt Theorem \ref{bjkthm} to the setting of $\mathsf{FBQP/qpoly}$, treating the advice as one-way communication from an advisor to the $\mathsf{FBQP}$ algorithm.

To understand the situation more deeply, recall the result of Aaronson \cite{aar:adv} from before, that $\mathsf{BQP/qpoly}\subseteq\mathsf{PP/poly}$. \ A direct analogue of that result for one-way communication complexity \cite{aar:adv} says that $D^1(f)$ and $Q^1(f)$ are close whenever Bob's input is small:

\begin{theorem}[\cite{aar:adv}]
\label{aarccthm}
For all Boolean functions $f:\bin^n \times \bin^m \rightarrow \bin$ (partial or total),
$$D^1(f) = O(m Q^1(f) \log Q^1(f)).$$
\end{theorem}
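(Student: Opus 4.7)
The plan is an adaptation, to the one-way communication setting, of the idea behind Aaronson's proof that $\mathsf{BQP/qpoly} \subseteq \mathsf{PostBQP/poly}$: simulate Alice's quantum message by a classical one that directs Bob to iteratively postselect on a short list of ``trouble'' inputs $y$. Fix a bounded-error quantum one-way protocol of length $q = Q^1(f)$; let $|\psi_x\rangle$ denote Alice's message on input $x$, and for each $y \in \{0,1\}^m$ let $\{E_y^0, E_y^1\}$ be the two-outcome POVM Bob applies, so that $\operatorname{tr}[E_y^{f(x,y)} |\psi_x\rangle\langle\psi_x|] \ge 2/3$. I would first amplify the protocol by taking Alice's message to be $|\psi_x\rangle^{\otimes \ell}$ for $\ell = O(\log q)$ and folding Bob's majority vote into a single POVM on $q' = O(q \log q)$ qubits, so that on every $y$ the amplified message succeeds with probability $\geq 1 - \varepsilon$ for a small constant $\varepsilon$. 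By Naimark dilation I would also assume, at the cost of $O(1)$ extra qubits, that Bob's measurements are projective, with the correct-outcome projector denoted $\Pi_y := \Pi_y^{f(x,y)}$.

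In the deterministic protocol, Bob maintains a hypothesis state $\sigma$ on $q'$ qubits, initialized to the maximally mixed state $I/2^{q'}$. Alice's classical message is a list $(y_1, f(x, y_1)), \ldots, (y_k, f(x, y_k))$, chosen greedily: $y_i$ is the lexicographically first input on which Bob's current hypothesis $\sigma_{i-1}$ would err with probability exceeding $1/3$, i.e., $\operatorname{tr}[\Pi_{y_i} \sigma_{i-1}] < 2/3$. On receiving each pair, Bob measures $\{\Pi_{y_i}, I - \Pi_{y_i}\}$ on $\sigma_{i-1}$, conditions on the announced outcome, and sets $\sigma_i := \Pi_{y_i} \sigma_{i-1} \Pi_{y_i} / \operatorname{tr}[\Pi_{y_i} \sigma_{i-1}]$. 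Once no trouble input remains, $\sigma_k$ produces the correct answer on \emph{every} $y$ with probability $\geq 2/3$, and Bob just applies the original POVM to $\sigma_k$ to decode his actual input.

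The heart of the proof is a potential argument showing $k = O(q')$, which yields a total classical message length of $k(m+1) = O(m\, Q^1(f) \log Q^1(f))$. The natural potential is $\Phi_i := \langle \psi_x | \sigma_i | \psi_x \rangle$: it starts at $2^{-q'}$, is always at most $1$, and should grow by a constant multiplicative factor at each correction. Writing $|\alpha\rangle = \Pi_{y_i}|\psi_x\rangle$ with $\|\alpha\|^2 \geq 1 - \varepsilon$, one computes $\Phi_i = \langle \alpha | \sigma_{i-1} | \alpha \rangle / \operatorname{tr}[\Pi_{y_i} \sigma_{i-1}] \geq (3/2)\langle \alpha | \sigma_{i-1} | \alpha \rangle$, and $\langle \alpha | \sigma_{i-1} | \alpha \rangle$ is close to $\Phi_{i-1}$ by Cauchy--Schwarz on the decomposition $|\psi_x\rangle = |\alpha\rangle + |\beta\rangle$ with $\|\beta\|^2 \leq \varepsilon$.

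The main technical obstacle is a well-known subtlety in this potential argument: the Cauchy--Schwarz estimate only gives $\Phi_i \geq c\,\Phi_{i-1}$ with $c > 1$ when $\Phi_{i-1} \gg \varepsilon$, so the early regime---where $\Phi_i$ may even decrease absolutely---requires separate handling, for example by tracking a modified potential such as $\max(\Phi_i, \varepsilon)$, or by exploiting the fact that $\|\sigma_0\|_{\mathrm{op}} = 2^{-q'}$ so that the cross terms are doubly small at the outset. Calibrating the amplification factor, $\varepsilon$, and the per-step growth constant to compose into exactly $O(m\, Q^1(f) \log Q^1(f))$ is the fiddly but routine heart of the argument, and is where I would expect to spend most of the work.
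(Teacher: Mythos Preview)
The paper does not prove this theorem; it is quoted from \cite{aar:adv} and invoked only as background, to highlight by contrast that no analogous upper bound can hold for relational problems. So there is no proof in the present paper for your attempt to be compared against.

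That said, your sketch is essentially the argument of \cite{aar:adv}: start Bob from the maximally mixed state on $q'$ qubits, have Alice greedily list Bob-inputs on which the current hypothesis would err together with the correct value $f(x,y_i)$, have Bob postselect on that announced outcome, and bound the number of rounds via a potential argument on $\langle\psi_x|\sigma_i|\psi_x\rangle$, which starts at $2^{-q'}$ and cannot exceed $1$. The subtlety you flag about the ``early regime''---where the Cauchy--Schwarz cross terms can swamp the desired geometric growth until the potential rises above roughly $\varepsilon$---is real and is exactly the issue the original proof has to work around. Your plan, including the amplification to $q' = O(q\log q)$ qubits so that the final message length is $O(mq')$, is on target.
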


This paper is pointing out that Theorem \ref{aarccthm}, and $\mathsf{BQP/qpoly}\subseteq\mathsf{PP/poly}$, both
 \textit{fail catastrophically} for sampling and relational problems. \ This seems not to have been known even to experts who we asked. \ One reason, perhaps, is that the original separation of Bar-Yossef, Jayram, and Kerenidis \cite{bjk} was partly overshadowed by the later work of Gavinsky et al.\ \cite{gkkrw}. \ The latter modified the Hidden Matching relational problem to obtain a partial Boolean function, called \textit{Boolean Hidden Matching} or $\operatorname{BHM}$. \ They then showed that $Q^1(\operatorname{BHM})=O(\log N)$ whereas $R^1(\operatorname{BHM})=\Omega(\sqrt{N})$.

We are calling attention to a surprising difference between the original Hidden Matching separation and the later Boolean Hidden Matching one. \ Namely: we can make Bob's input ``small'' (say, $O(\log n)$ bits) in the $\operatorname{HM}$ separation, \textit{even though we cannot do the same in the $\operatorname{BHM}$ separation}. \ For Boolean $f$, Theorem \ref{aarccthm} shows that an exponential gap between $D^1(f)$ and $Q^1(f)$ is possible only when Bob's input is ``large.''

\subsection{Other Proofs}

Let us make a few remarks about our other results, proved in Sections \ref{MAIN} and \ref{FBPPSEC}. \ To show that $\mathsf{FBPP/rpoly}=\mathsf{FBPP/poly}$, we just take deterministic advice that consists of $O(n/\varepsilon^2)$ independent samples from the randomized advice distribution, and then appeal to a Hoeffding and union bound. \ To show that $\mathsf{FP}\ne \mathsf{FBPP}$, we consider the problem of outputting an $n$-bit string with large time-bounded Kolmogorov complexity.\footnote{An alternative approach (not shown here) is to prove $\mathsf{FP} \neq \mathsf{FBPP}$ using a direct diagonalization. The core idea of this argument is captured in \cite[Section 3.1]{goldreich:pbpp}.} \ To show that $\mathsf{FBPP} \subseteq \mathsf{FP^{PromiseBPEXP}}$, we give a simple polynomial-time algorithm that builds a string in the relation one bit at a time, using the $\mathsf{PromiseBPEXP}$ oracle.

Finally, and most interestingly, to show that a ``relational Adleman's Theorem'' ($\mathsf{FBPP} \subset \mathsf{FP/poly}$) is unlikely to hold, we build on an old idea due to Buhrman and Torenvliet \cite{buhrtor}. \ We show that, if the problem of generating strings of high conditional time-bounded Kolmogorov complexity were in $\mathsf{FP/poly}$, then in the $\mathsf{IP}=\mathsf{PSPACE}$ protocol \cite{shamir}, we could replace the randomized verifier by a deterministic polynomial-size circuit. \ Roughly speaking, the verifier replaces each random challenge with a string of high time-bounded Kolmogorov complexity conditioned on the prior transcript of the protocol. \ To argue that this derandomization is sound, we just have to show that the ``bad'' choices of randomness (i.e. those that cause the verifier to accept when it should reject) all have low conditional time-bounded Kolmogorov complexity. \ We complete the proof by observing that this derandomization would put $\mathsf{PSPACE}$ into $\mathsf{NP/poly}$.

\subsection{Sampling Problems\label{SAMPLINTRO}}

We conclude with some results about sampling problems, which are closely related to relation problems. \ 
A sampling problem is defined by a collection of probability distributions $\mathcal D_x$. \ Given an input $x$, the goal is to output a sample from $\mathcal D_x$, either exactly or approximately. \ Like for relational problems, we call a sampling problem $S=\{ \mathcal D_x \}_{x\in \bin^*}$ \textit{polynomially-bounded} if there exists a polynomial $p$ such that for every $x$, $\mathcal{D}_x$ is a distribution over strings of length at most $p(|x|)$. \ Again following Aaronson \cite{aar:samp}, we define the basic complexity class like so:

\begin{definition}
$\mathsf{SampBQP}$ is the class of polynomially-bounded sampling problems $S=\{ \mathcal D_x \}_{x\in \bin^*}$ for which there exists a polynomial-time quantum algorithm $Q$ such that for all $x$ and all $\varepsilon > 0$,
$$ \| \mathcal D_Q(x,0^{1/\varepsilon}) - \mathcal D_x \| \le \varepsilon,$$
where $\mathcal D_Q(x,0^{1/\varepsilon})$ represents $Q$'s output distribution on input $(x,0^{1/\varepsilon})$ and $\| \|$ represents total variation distance.
\end{definition}

Again, we can consider the classical analogue $\mathsf{SampBPP}$ (the deterministic version, $\mathsf{SampP}$, doesn't make much sense). \ We can also combine with deterministic, randomized, and quantum advice like in Definition \ref{fbqpqpolydef}, to get $\mathsf{SampBPP/poly}$, $\mathsf{SampBQP/qpoly}$, and so on. \ For example:

\begin{definition}
$\mathsf{SampBPP/rpoly}$ is the class of polynomially-bounded sampling problems $S=\{ \mathcal D_x \}_{x\in \bin^*}$ for which there exists a polynomial-time randomized algorithm $A$, a polynomial $p(n,m)$, and an infinite list of advice distributions $\{ \mathcal{D}_{n,m}\}_{n,m \ge 1}$, where $\mathcal{D}_{n,m}$ is supported on $\{0,1\}^{p(n,m)}$, such that for all $x$ and all $m$,
$$ \| \mathcal D_A(x,0^m,\mathcal{D}_{n,m}) - \mathcal D_x \| \le \frac{1}{m},$$
where $\mathcal D_A(x,0^m,\mathcal{D}_{n,m})$ represents $A$'s output distribution on input $(x,0^m,y)$ averaged over $y \sim \mathcal{D}_{n,m}$ and $\| \|$ represents total variation distance.
\end{definition}

\begin{definition}
$\mathsf{SampBQP/qpoly}$ is the class of polynomially-bounded sampling problems $S=\{ \mathcal D_x \}_{x\in \bin^*}$ for which there exists a polynomial-time quantum algorithm $Q$, a polynomial $p(n,m)$, and an infinite list of advice states $\{ \ket{\psi_{n,m}}\}_{n,m \ge 1}$, where $\ket{\psi_{n,m}}$ is on $p(n,m)$ qubits, such that for all $x$ and all $m$,
$$ \| \mathcal D_Q(x,0^m,\ket{\psi_{n,m}}) - \mathcal D_x \| \le \frac{1}{m},$$
where $\mathcal D_Q(x,0^m,\ket{\psi_{n,m}})$ represents $Q$'s output distribution on input $(x,0^m,\ket{\psi_{n,m}})$ and $\| \|$ represents total variation distance.
\end{definition}

Note that our separations will also hold for the \textit{exact} versions of these sampling classes, but the $\varepsilon$-approximate versions are more robust and seem of greater interest.

Our basic results, proved in Section \ref{SAMP}, are as follows. First, we show that sampling classes are more powerful with randomized advice than with deterministic advice:

\begin{restatable}{theorem}{samppolyrpolyrestate}
\label{samppolyrpoly}
$\mathsf{SampBPP/poly} \ne \mathsf{SampBPP/rpoly}$ and $\mathsf{SampBQP/poly} \ne \mathsf{SampBQP/rpoly}$.
\end{restatable}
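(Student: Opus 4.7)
The plan is a counting-based diagonalization, exploiting an asymmetry that is sharper for sampling problems than for relational ones: with randomized advice we can place a fresh sample of the target distribution directly inside the advice, whereas with deterministic advice of length $\ell$ a fixed algorithm can produce at most $2^\ell$ distinct output distributions on any given input, one per advice string.

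For the easy direction, I first observe that for any family $\{\mathcal{D}_n\}_{n\ge 1}$ with $\mathcal{D}_n$ supported on $\{0,1\}^n$, the sampling problem $S$ defined by $\mathcal{D}_{0^n} := \mathcal{D}_n$ lies in $\mathsf{SampBPP/rpoly}$ (and hence in $\mathsf{SampBQP/rpoly}$): take the randomized advice distribution $\mathcal{D}_{n,m}$ to be $\mathcal{D}_n$ itself, ignore $m$, and let the algorithm simply echo the advice. This yields exactly $\mathcal{D}_n$, so the TV error is $0$ for every $m$.

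For the hard direction, I would combine a packing estimate on the simplex with a standard diagonalization. A volume/covering argument on $\Delta(\{0,1\}^n)$, which has dimension $2^n - 1$, shows that it contains at least $(cn)^{2^n - 1}$ distributions pairwise at TV distance strictly greater than $1/n$. Meanwhile, for any fixed polynomial-time (classical or quantum) algorithm $A$ and any polynomial $p(n,m)$, the collection of output distributions that $A$ can produce on input $0^n$ from a $p(n,n)$-bit deterministic advice string has size at most $2^{p(n,n)}$. For $n$ large enough this is dwarfed by the packing number, so there exists a distribution $\mathcal{D}_n$ on $\{0,1\}^n$ at TV distance $>1/n$ from every $(A,p)$-producible distribution; in particular $(A,p)$ violates the accuracy requirement at $m = n$. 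Enumerating all pairs $(A_i, p_i)$ and choosing a strictly increasing sequence $n_i$ on which the counting bound applies, we set $\mathcal{D}_{n_i}$ to be any distribution bad for $(A_i, p_i)$, and $\mathcal{D}_n$ arbitrary elsewhere. The resulting sampling problem lies outside $\mathsf{SampBPP/poly}$; rerunning the enumeration over polynomial-time quantum circuits (which, with a deterministic classical advice string, still give at most $2^{p(n,n)}$ producible output distributions) puts it outside $\mathsf{SampBQP/poly}$ as well.

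I expect no serious technical obstacle. The conceptual point worth emphasizing, and the reason this argument does not contradict Theorem~\ref{rpolypolyequal}, is that the set of output distributions achievable with randomized advice is exactly the convex hull of the set achievable with deterministic advice: for relational problems a convex combination of good-enough distributions always admits some individual good advice string by averaging (which is what Hoeffding-sampling exploits), but for a sampling target the convex hull strictly enlarges the reachable set in TV distance, and a generic $\mathcal{D}_n$ chosen by counting lies at distance $>1/n$ from every vertex of the hull.
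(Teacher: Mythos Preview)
Your approach is correct and rests on the same counting principle as the paper's, though the execution differs. The paper takes as target, on every input of length $n$, the uniform distribution on a subset $S_n \subseteq \{0,1\}^n$ chosen uniformly at random; the upper bound is then literally ``let the randomized advice be a uniform sample from $S_n$,'' and the lower bound is stated simply as ``a counting argument shows that the problem has probability $0$ of being in $\mathsf{SampBPP/poly}$.'' Your explicit diagonalization via a packing estimate on the full simplex makes the counting transparent and avoids the probabilistic-method wrapper; the paper's random-subset construction is more concrete but leaves the counting as an exercise.

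Two small quibbles on your write-up. First, define the sampling problem on all inputs (say $\mathcal{D}_x := \mathcal{D}_{|x|}$) rather than only on $0^n$. Second, a pairwise packing at scale $1/n$ does not by itself yield a point at distance $>1/n$ from each of the $K$ producible distributions, since a single producible distribution can be $1/n$-close to several packed points simultaneously. Either take the packing scale to be $2/n$ (so each producible distribution is $1/n$-close to at most one packed point and pigeonhole applies), or run the volume argument directly on the union of the $K$ balls of radius $1/n$ around the producible distributions; both fixes are immediate. Your closing remark on convex hulls is a nice explanation of why this argument does not transfer to the relational setting.
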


To prove Theorem \ref{samppolyrpoly}, we simply choose a probability distribution over $\bin^n$ randomly for each $n$, then appeal to a counting argument.

Second, as a straightforward corollary of Theorem \ref{qpolythm}, we show that quantum advice provides more power than classical advice for sampling problems:

\begin{restatable}{theorem}{samprpolyqpolyrestate}
\label{samprpolyqpoly}
$\mathsf{SampBQP/rpoly} \ne \mathsf{SampBQP/qpoly}$.
\end{restatable}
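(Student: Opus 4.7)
The plan is to lift the relational separation of Theorem~\ref{qpolythm} to a sampling separation by exhibiting a concrete sampling problem whose support is contained in the valid outputs of the separating relation. The witness for Theorem~\ref{qpolythm} is based on the Bar-Yossef--Jayram--Kerenidis Hidden Matching construction: for each input length there is a hidden string $x \in \{0,1\}^N$ encoded in the advice, inputs $y$ are perfect matchings on $[N]$, and valid outputs are triples $(i,j,x_i\oplus x_j)$ with $(i,j)\in y$. The natural quantum protocol does more than just find \emph{some} valid triple: measuring the BJK advice state in the basis associated with $y$ produces a triple drawn from the uniform distribution $\mathcal{D}_y$ over the set $\{(i,j,x_i\oplus x_j) : (i,j)\in y\}$.

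First, I would define the sampling problem $S = \{\mathcal{D}_y\}_y$ and observe that $S \in \mathsf{SampBQP/qpoly}$. Indeed, the same quantum circuit that solves the relational Hidden Matching problem from Theorem~\ref{qpolythm}, using the same quantum advice, outputs $\mathcal{D}_y$ \emph{exactly}. Hence for every precision parameter $m$ the total variation error is $0 \le 1/m$, so the protocol witnesses membership in $\mathsf{SampBQP/qpoly}$ without any dependence of the advice on $m$.

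Second, I would show that $S \notin \mathsf{SampBQP/rpoly}$ by contraposition. Suppose for contradiction that $S \in \mathsf{SampBQP/rpoly}$, witnessed by a quantum algorithm $A$ with classical randomized advice that, for parameter $m$, outputs a sample from a distribution $\mathcal{D}'_y$ with $\|\mathcal{D}'_y - \mathcal{D}_y\| \le 1/m$. Because every element of $\mathrm{supp}(\mathcal{D}_y)$ is a valid answer to the separating relation $R$ on input $y$, the probability that $A$'s output lies \emph{outside} the valid set is at most $\|\mathcal{D}'_y - \mathcal{D}_y\| \le 1/m$. Substituting $m = \lceil 1/\varepsilon \rceil$ gives an $\mathsf{FBQP/rpoly}$ protocol for $R$ with error $\varepsilon$, and running with all relevant values of $m$ produces a genuine $\mathsf{FBQP/rpoly}$ algorithm in the sense of Definition~\ref{fbqpqpolydef}, contradicting Theorem~\ref{qpolythm}.

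The only point requiring a small amount of care, and which I expect to be the most finicky rather than genuinely hard step, is matching conventions: verifying that the $1/m$-closeness in total variation in the sampling definition translates cleanly to the $1 - 1/m$ success probability demanded by the relational definition, and checking that the advice-length polynomial $p(n,m)$ in the sampling side yields a valid $p(n,m)$ on the relational side. Both follow immediately from the standard fact that if $\mathcal{D}' $ is $\delta$-close to $\mathcal{D}$ in total variation and $T$ is any event with $\mathcal{D}(T)=1$, then $\mathcal{D}'(T) \ge 1-\delta$; no new complexity-theoretic ideas are needed beyond Theorem~\ref{qpolythm} itself.
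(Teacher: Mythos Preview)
Your proposal is correct and follows essentially the same approach as the paper: define the sampling problem as the (exact) output distribution of the errorless quantum algorithm for the Hidden Matching relation, note that this places it in $\mathsf{SampBQP/qpoly}$, and observe that a $\mathsf{SampBQP/rpoly}$ solver within total variation $\varepsilon$ would solve the relation with probability $1-\varepsilon$, contradicting Theorem~\ref{qpolythm}. The only cosmetic difference is that you phrase the relation in the raw Bar-Yossef--Jayram--Kerenidis variables (hidden string, matching input, triple output) rather than in the paper's specific $R_F$ encoding, but the argument is the same.
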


Theorem \ref{samppolyrpoly} contrasts with the situation for relational problems, where $\mathsf{FBPP/poly} = \mathsf{FBPP/rpoly}$ by Theorem \ref{rpolypolyequal}. \ This is noteworthy because Aaronson \cite{aar:samp} used Kolmogorov complexity to prove a general connection between sampling problems and relational problems. \ This connection had the following implication, among others:

\begin{theorem}[\cite{aar:samp}]
$\mathsf{FBPP}=\mathsf{FBQP}$ if and only if $\mathsf{SampBPP}=\mathsf{SampBQP}$.
\end{theorem}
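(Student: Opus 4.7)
The plan is to prove the two directions separately, using the fact that any $\mathsf{FBQP}$ algorithm naturally induces a sampling problem, and conversely that any $\mathsf{SampBQP}$ algorithm can be wrapped into a relation via time-bounded Kolmogorov complexity.

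For the easy direction ($\mathsf{SampBPP}=\mathsf{SampBQP}\Rightarrow \mathsf{FBPP}=\mathsf{FBQP}$), I would take any $R\in\mathsf{FBQP}$ solved by a quantum algorithm $Q$ and associate to it the sampling problem $S_R=\{\mathcal{D}_{x,m}\}$, where $\mathcal{D}_{x,m}$ is the output distribution of $Q(x,0^{2m})$. Trivially $S_R\in\mathsf{SampBQP}$, so by hypothesis $S_R\in\mathsf{SampBPP}$, via some classical sampler $A$ with total variation error at most $1/(2m)$. Since $Q(x,0^{2m})$ outputs a valid $y$ with probability at least $1-1/(2m)$, the triangle inequality then gives that $A$ outputs a valid $y$ with probability at least $1-1/m$, placing $R$ in $\mathsf{FBPP}$.

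For the hard direction ($\mathsf{FBPP}=\mathsf{FBQP}\Rightarrow \mathsf{SampBPP}=\mathsf{SampBQP}$), let $S=\{\mathcal{D}_x\}\in\mathsf{SampBQP}$ be solved by a quantum sampler $Q$. Following the approach of \cite{aar:samp}, I would define the ``typicality'' relation
$$ R_S = \bigl\{(x,y)\;:\; K^{t}(y\mid x,0^m)\;\ge\;\log_2 \tfrac{1}{\mathcal{D}_x(y)} - c\log|x|\bigr\}, $$
where $K^t$ is time-$t$-bounded Kolmogorov complexity for a suitable polynomial $t=t(|x|,m)$ and constant $c$. A counting argument shows that for $y\sim\mathcal{D}_x$ the pair $(x,y)$ lies in $R_S$ with probability at least $1-1/m$, so $Q$ itself witnesses $R_S\in\mathsf{FBQP}$; by hypothesis, $R_S\in\mathsf{FBPP}$ via some classical algorithm $A$. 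The core claim is then that the output distribution of $A$ is within total variation distance $O(1/m)$ of $\mathcal{D}_x$, yielding $S\in\mathsf{SampBPP}$.

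The main obstacle is this last claim, which is the quantitative heart of the proof. The idea is a bucketing argument: partition the support of $\mathcal{D}_x$ into dyadic ``probability buckets'' $B_k=\{y:2^{-k-1}<\mathcal{D}_x(y)\le 2^{-k}\}$. If $A$ overweights some bucket $B_k$, then it must be outputting with nontrivial mass on strings whose required Kolmogorov complexity (per $R_S$) is at least $k - c\log|x|$. But any string in the support of $A(x,0^m,\cdot)$ has conditional Kolmogorov complexity bounded by the length of the random seed, so only a limited number of sufficiently high-complexity strings exist, forcing $A$ to violate $R_S$ more often than allowed. The delicate part is choosing the constant $c$ and time bound $t$ so that (i) random samples from $\mathcal{D}_x$ remain typical with the desired probability, (ii) the Kolmogorov threshold is tight enough to preclude overweighting, and (iii) $K^t$ is computable within the $\mathsf{FBQP}/\mathsf{FBPP}$ time budgets polynomial in $|x|$ and $m$. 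Once these are balanced, summing the bucket-by-bucket discrepancies gives the desired total variation bound.
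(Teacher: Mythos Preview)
The paper does not prove this theorem; it is merely quoted with a citation to \cite{aar:samp} and invoked as background in Section~\ref{SAMPLINTRO}. So there is no proof here to compare your proposal against.

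For what it is worth, your outline does track the argument in the cited source. The easy direction is exactly as you say. For the hard direction, the Kolmogorov-typicality relation is the right device, but your formulation has a loose end: the parameter $m$ appears in the conditioning $K^t(y\mid x,0^m)$ and in the implicit threshold, yet the relation is written as a set of pairs $(x,y)$ with no $m$ on the input side, so $R_S$ is not well-defined as stated. In \cite{aar:samp} the input to the relation is $(x,0^m)$ (so the accuracy parameter is part of the instance), and the time bound and slack are functions of both $|x|$ and $m$. Your bucketing intuition for why an $\mathsf{FBPP}$ solver of $R_S$ must be close in total variation to $\mathcal{D}_x$ is the correct picture, but as you yourself flag, balancing the constants so that (i) genuine samples are typical, (ii) atypical strings are too few to overweight, and (iii) everything stays within the polynomial time budget, is the entire substance of the proof---what you have written is a plan, not yet a proof.
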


Yet as we now see, the ``equivalence'' does not force the question of the power of randomized advice to have the same answer for sampling problems that it has for relational problems.

See Figure \ref{sampclassfig} for a complexity class inclusion diagram that summarizes our results about sampling classes.

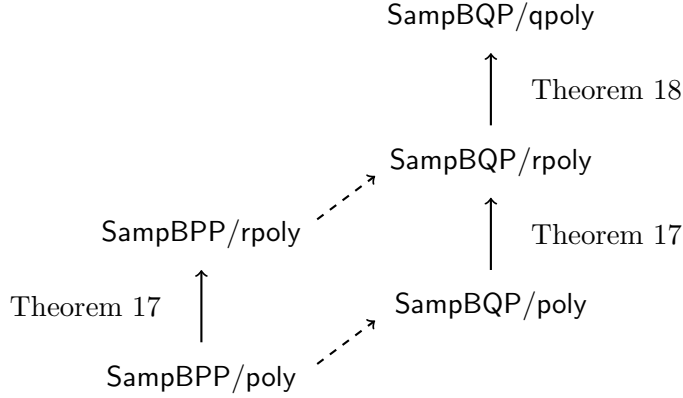
\begin{figure}
\centering
\begin{tikzpicture}[x=5em,y=5em]
    \node at (0,3) {$\mathsf{SampBQP/qpoly}$};
    \draw[thick,->] (0,2.25) -- (0,2.75);
    \node at (0.8,2.5) {Theorem \ref{samprpolyqpoly}};
    \node at (0,2) {$\mathsf{SampBQP/rpoly}$};
    \draw[thick,->] (0,1.25) -- (0,1.75);
    \node at (0.8,1.5) {Theorem \ref{samppolyrpoly}};
    \node at (0,1) {$\mathsf{SampBQP/poly}$ };
    \draw[thick,dashed,->] (-1.2,1.6) -- (-0.8,1.9);
    \node at (-2,1.5) {$\mathsf{SampBPP/rpoly}$};
    \draw[thick,->] (-2,0.75) -- (-2,1.25);
    \node at (-2.8,1) {Theorem \ref{samppolyrpoly}};
    \draw[thick,dashed,->] (-1.2,0.6) -- (-0.8,0.9);
    \node at (-2,0.5) {$\mathsf{SampBPP/poly}$};
\end{tikzpicture}
    \caption{\label{sampclassfig} Relationships among classes of sampling problems considered in this paper. \ \ A solid arrow from $\mathcal{C}_1$ to $\mathcal{C}_2$ indicates strict containment $(\mathcal{C}_1 \subsetneq \mathcal{C}_2)$. \ A dashed arrow indicates a containment $\mathcal{C}_1 \subseteq \mathcal{C}_2$
    that we conjecture to be strict, but a proof of strictness would require a breakthrough in complexity theory.}
\end{figure}







\section{Deterministic, Randomized, and Quantum Advice\label{MAIN}}


We start this section by observing that for relational problems, randomized advice gives no more power than deterministic advice.

\rpolypolyequalrestate*

\begin{proof}
We first prove that $\mathsf{FBPP/rpoly} = \mathsf{FBPP/poly}$. The proof of $\mathsf{FBQP/rpoly} = \mathsf{FBQP/poly}$ is identical but with quantum algorithms in place of randomized algorithms, so we omit it. \ Let $R$ be a relational problem in $\mathsf{FBPP/rpoly}$, decided by an
algorithm $A$. \ Fix an input length $n$ and an $\varepsilon > 0$. \ Let $\mathcal{D}_{n,\varepsilon}$ be the distribution over
advice strings. \ Then for all $x\in\bin^n$, we must have
$$\Pr_{w\sim \mathcal{D}_{n,\varepsilon}} [ (x,A(x,0^{1/\varepsilon},w)) \in R ] \ge 1-\varepsilon.$$

In our $\mathsf{FBPP/poly}$ simulation, we'll take (say) $k=100n/\varepsilon^2$
independent samples $w_1,\ldots,w_k$ from $\mathcal{D}_{n,\varepsilon / 2}$ as the advice. \ Given an
input $x$, we'll then just pick $i\in \{1,\ldots,k\}$ uniformly at random and output $A(x,0^{2/\varepsilon},w_i)$. \ By Hoeffding's inequality, we have that for any fixed $x \in \{0,1\}^n$,
\[
\Pr_{w_1,\ldots,w_k}\left[
\Pr_{i}\left[A(x, 0^{2/\varepsilon}, w_i) \in R \right] < 1 - \varepsilon
\right]
\le
\exp\left(-k \varepsilon^2 / 2\right).
\]

Hence, by a union bound over all $x \in \{0,1\}^n$, there exists some choice of $w_1,\ldots,w_k$ that allows the $\mathsf{FBPP/poly}$ simulation to succeed with probability at least $1 - \varepsilon$ on every $x$.

Lastly, we also have $\mathsf{FBPP/rpoly} = \mathsf{FP/rpoly}$, since the randomized advice to an $\mathsf{FP}$ machine can include as many uniformly random bits as are needed to simulate any desired $\mathsf{FBPP}$ machine.
\end{proof}


We now prove the unconditional separation between $\mathsf{FBQP}$ with quantum advice and $\mathsf{FBQP}$ with classical advice.


\qpolythmrestate*

\begin{proof}
From Theorem \ref{rpolypolyequal}, it suffices to show that $\mathsf{FBQP/qpoly} \neq \mathsf{FBQP/poly}$. \ 
Let $F=\{f_n\}_{n\ge 1}$ be an infinite family of Boolean functions, with
$$ f_n : \bin^n \rightarrow \bin. $$
Then we define the following relation problem:
$$R_F = \{ (x,(y,b)) : x,y \in \bin^n, b\in \bin, f_n(y)\oplus f_n(y\oplus x)=b \}.$$
In other words, given an input $x\in \bin^n$, the problem is to output another string $y\in \bin^n$, along with a bit $b$, such that $f_n(y)$ and $f_n(y\oplus x)$ XOR to $b$.

We first show that, for all $F$, this problem is in $\mathsf{FBQP/qpoly}$. \ The quantum advice state is simply
$$\ket{\psi_n} := \frac{1}{\sqrt{2^n}} \sum_{y \in \bin^n} (-1)^{f_n(y)} \ket{y}.$$
Given an input $x\in\bin^n$, along with $\ket{\psi_n}$, the algorithm is now as follows. \ If $x=0^n$, then just output $(y,0)$ for any $y\in\bin^n$. \ Otherwise, first find a matrix $A\in \mathbb{F}_2^{(n-1)\times n}$ whose nullspace is $\{0,x\}$. \ Then map $\ket{\psi_n}$ to
$$ \frac{1}{\sqrt{2^n}} \sum_{y \in \bin^n} (-1)^{f_n(y)} \ket{y} \ket{Ay} $$
and measure the $\ket{Ay}$ register in the computational basis, to reduce the $\ket{y}$ register to the form
$$ \frac{ (-1)^{f_n(y)} \ket{y} + (-1)^{f_n(y\oplus x)} \ket{y\oplus x} }{\sqrt{2}} $$
for some $y$. \ Then measure the above state in the $\{ \ket{y} \pm \ket{y\oplus x} \}$ basis, to learn the relative phase $b:=f(y)\oplus f(y\oplus x)$. \ Finally, output $y,b$. \ This algorithm succeeds with certainty for every $x$.

By contrast, Theorem \ref{bjkthm_distributional} implies that, with probability $1$ over the choice of $F$, the problem is not in $\mathsf{FBQP/poly}$, or indeed in $\mathsf{FBQP/rpoly}$. \ For each possible input $x\ne 0^n$ gives rise to a matching $\mathcal M_x := \{ (y,y\oplus x) \; | \; y \in \bin^n \}$ on $\bin^n$, and these matchings are pairwise edge-disjoint. \ So, if we imagine that Alice holds the truth table of a random Boolean function $f_n$, consisting of $N=2^n$ bits, while Bob holds a random index $x$ of the matching, we find that Alice must send $\Omega(\sqrt{N})=\Omega(2^{n/2})$ classical bits to Bob to allow him to satisfy the relation $R_F$ with a success probability of at least $7/8$.

In the actual problem, of course, the function $f_n$ is fixed for each $n$, rather than chosen by an Alice, and the $\mathsf{FBQP/poly}$ algorithm $Q$'s behavior depends on the $f_n$'s via the classical advice, rather than a message from Alice. \ Given a choice of $F$, let $a_{F,n,m} \in \{0,1\}^{\operatorname{poly}(n,m)}$ be the advice string for inputs of length $n$ with error $1/m$. \ Then in order for $Q$ to be correct on $x \in \{0,1\}^n$, we require that for all $m$,
\[
\Pr \left[(x, Q(x, 0^m, a_{F,n,m})) \in R_F\right] \ge 1 - 1/m.
\]
If we imagine that $F=\{f_n\}_{n\ge 1}$ is chosen uniformly at random, then we can bound the probability that $Q$ satisfies this condition on all inputs of length $n$, i.e.
\begin{align*}
    &\Pr_F\left[\forall x \in \{0,1\}^n : \Pr\left[(x, Q(x, 0^m, a_{F,n,m})) \in R_F\right] \ge 1 - 1/m \right]\\
    \le &\Pr_{F,x \sim \{0,1\}^n}\left[\Pr \left[(x, Q(x, 0^m, a_{F,n,m})) \in R_F\right] \ge 1 - 1/m \right]\\
    \le &\frac{m}{m-1} \Pr_{F,x \sim \{0,1\}^n} \left[(x, Q(x, 0^m, a_{F,n,m})) \in R_F\right],
\end{align*}
where the last line uses Markov's inequality. \ 
Choose $m = 16$, so that $a_{F,n,m}$ is a string of length $\operatorname{poly}(n) = o\left(2^{n/2}\right)$. \ Then combining the above bound with Theorem \ref{bjkthm_distributional} implies that
\[
\Pr_F\left[\forall x \in \{0,1\}^n : \Pr \left[(x, Q(x, 0^m, a_{F,n,m})) \in R_F\right] \ge 1 - 1/m \right] \le \frac{16}{15} \cdot \frac{7}{8} = \frac{14}{15}
\]
for all sufficiently large $n$. \ Moreover, this probability is independent for each $n \in \mathbb{N}$, because each $f_n$ is chosen independently, so the overall probability that any choice of advice allows $Q$ to compute $R_F$ is at most $\prod_{n=1}^\infty 14/15 = 0$. \ This is to say that a uniformly random $F$ satisfies $R_F \not\in \mathsf{FBQP/poly}$ with probability $1$.
\end{proof}

Note that, in the proof of $R_F \not\in \mathsf{FBQP/poly}$, we nowhere needed the fact that the algorithm was an efficient quantum algorithm (i.e., $\mathsf{FBQP}$), but only that the algorithm succeeds with bounded error. \ Hence we can conclude more generally that $R_F \not\in \mathsf{F}\mathcal{C}\mathsf{/poly}$ for uniform complexity classes $\mathcal{C}$ with arbitrarily large computational power,
such as $\mathsf{PSPACE}$, $\mathsf{EXP}$, $\mathsf{BPEXP}$, $\mathsf{R}$, and so on. \ We additionally get $R_F \not\in \mathsf{FBQP/rpoly}$, because $\mathsf{FBQP/rpoly} = \mathsf{FBQP/poly}$.\footnote{Alternatively, $R_F \not\in \mathsf{FBQP/rpoly}$ can be shown directly by a small modification of the above proof: simply replace the advice string $a_{F,n,m}$ with a sample from an advice distribution. \ This works because Theorem \ref{bjkthm_distributional} lower-bounds \textit{randomized} one-way communication complexity, not just deterministic, by Yao's principle \cite{yao:lemma}.} \ On the other hand, we cannot say that $R_F \not\in \mathsf{F}\mathcal{C}\mathsf{/rpoly}$ for any $\mathcal{C}$, because of the way the success conditions of certain complexity classes interact with randomized advice: as an example, $\mathsf{PostBPP/rpoly}=\mathsf{ALL}$, and so a reasonably defined relational analogue $\mathsf{FPostBPP/rpoly}$ certainly \textit{would} contain $R_F$.

It is interesting to ask just how efficient we can make the quantum algorithm of Theorem \ref{qpolythm}. \ We describe how to implement the measurement on $\ket{\psi_n}$ via a simpler circuit, without the need to compute the matrix-vector multiplication $Ay$. \ We claim the following: first, the quantum circuit for measuring $\ket{\psi_n}$ and learning the output string $y,b$ can be taken to be a stabilizer circuit. \ Second, this stabilizer circuit has $O(n)$ size and can be constructed in $O(n)$ time.

To see why, suppose for example that the input $x$ is $001111$. \ Suppose we measure $\ket{\psi_n}$ according to the circuit in Figure \ref{cktfig}, and get the result $z = z_1z_2z_3z_4z_5z_6$. \ We claim that this measurement result corresponds to collapsing the input state to
\[\frac{\ket{y} + (-1)^b \ket{x \oplus y}}{\sqrt{2}},\]
where $y = z_1z_2z_3z_4z_50$ and $b = z_6$. \ The easiest way to see why is to consider the resulting state when we apply the inverse circuit to $\ket{z}$.

For a general $x$ of Hamming weight $k \ge 1$, we choose an arbitrary $i$ for which $x_i = 1$, and let qubit $i$ play the role of measuring $b$. \ The circuit will consist of $k - 1$ CNOT gates between qubit $i$ and the other qubits $j$ for which $x_j = 1$, followed by a single Hadamard gate on qubit $i$ to measure $b$.


\begin{figure}
\centering
\[
\Qcircuit @C=1.2em @R=0.7em {
\lstick{x_1 = 0 \quad}
& \qw
& \qw
& \qw
& \qw
& \meter
& \rstick{z_1}
\\
\lstick{x_2 = 0 \quad}
& \qw
& \qw
& \qw
& \qw
& \meter
& \rstick{z_2}
\\
\lstick{x_3 = 1 \quad}
& \targ
& \qw
& \qw
& \qw
& \meter
& \rstick{z_3}
\\
\lstick{x_4 = 1 \quad}
& \qw
& \targ
& \qw
& \qw
& \meter
& \rstick{z_4}
\\
\lstick{x_5 = 1 \quad}
& \qw
& \qw
& \targ
& \qw
& \meter
& \rstick{z_5}
\\
\lstick{x_6 = 1 \quad}
& \ctrl{-3}
& \ctrl{-2}
& \ctrl{-1}
& \gate{H}
& \meter
& \rstick{z_6}
\\
}
\]
\caption{A circuit for measuring the state $\ket{\psi_n}$ in Theorem \ref{qpolythm}. \ Note that the circuit will depend on the input $x$; an example with $x=001111$ is shown. \ First, all qubits $i$ such that $x_i=0$ are measured in the computational basis. \ Next, the qubits $i$ such that $x_i=1$ are measured to determine two things: (i) a computational basis state modulo a NOT gate being applied to each qubit; and (ii) the relative phase between those two basis states, one with the NOT gates applied and the other without (this is the purpose of the sole Hadamard gate).}
\label{cktfig}
\end{figure}
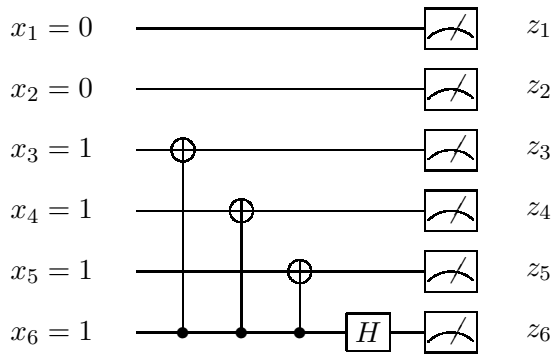

One more comment: our proof of Theorem \ref{qpolythm} was nonconstructive, in the sense that we did not exhibit any particular $F$ such that $R_F \not\in \mathsf{FBQP/poly}$, but merely used counting to show that a random $F$ works with probability $1$. \ Thus, it is natural to wonder whether we could find an ``explicit'' $F$, say $F\in\mathsf{FPSPACE}$, such that $R_F\not\in \mathsf{FBQP/poly}$ under a plausible hardness assumption. \ We do not know, but we would like to observe that for the \textit{promise problem} versions of these classes---namely $\mathsf{PromiseBQP/poly}$ and $\mathsf{PromiseBQP/qpoly}$---general principles imply (perhaps surprisingly) that, if there is any separation at all, then the separation can be witnessed ``explicitly'':

\begin{proposition}
Suppose $\mathsf{PromiseBQP/poly} \ne \mathsf{PromiseBQP/qpoly}$. \ Then there is a $\mathsf{PromisePP}$ problem in $\mathsf{PromiseBQP/qpoly}$ but not in $\mathsf{PromiseBQP/poly}$.
\label{explicit}
\end{proposition}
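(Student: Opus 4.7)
The plan is to combine Aaronson's promise-class containment $\mathsf{PromiseBQP/qpoly} \subseteq \mathsf{PromisePP/poly}$ with a ``deadvising'' construction that absorbs classical advice back into the input. First I would take any witness $\Pi \in \mathsf{PromiseBQP/qpoly} \setminus \mathsf{PromiseBQP/poly}$ guaranteed by the hypothesis, given by a $\mathsf{BQP}$ algorithm $Q$ and quantum advice $\{\ket{\psi_n}\}_{n \ge 1}$. By the promise version of Aaronson's theorem, there is a uniform $\mathsf{PromisePP}$ machine $M$ and a classical advice sequence $\{a_n\}_{n \ge 1}$ of length $q(n)$ such that $M(x,a_n)$ decides $\Pi(x)$ with the standard $\mathsf{PromisePP}$ gap $\gamma$ whenever $\Pi$'s promise holds on $x$.

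I would then define a deadvised problem $\Pi'$ whose instances are pairs $(x,a)$ with $|a| = q(|x|)$: the yes-instances are those on which $M(x,a)$ accepts with probability at least $1/2 + \gamma$, and the no-instances are those on which it accepts with probability at most $1/2 - \gamma$. By construction $\Pi' \in \mathsf{PromisePP}$. Moreover $\Pi' \notin \mathsf{PromiseBQP/poly}$: any such algorithm would, after hardcoding the sequence $\{a_n\}_{n \ge 1}$ into its classical advice on inputs of padded length, yield a $\mathsf{PromiseBQP/poly}$ algorithm for $\Pi$, contradicting our choice of $\Pi$.

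The main remaining step, and the central obstacle, is to establish $\Pi' \in \mathsf{PromiseBQP/qpoly}$. The natural candidate algorithm on input $(x,a)$ recovers $n = |x|$ from the input length, ignores $a$, uses $\ket{\psi_n}$ as its quantum advice, and outputs $Q(x,\ket{\psi_n}) = \Pi(x)$; this is correct on every promise-satisfying pair precisely when $M(x,a) = \Pi(x)$ holds there. I plan to arrange this by engineering the Aaronson reduction so that $M$'s acceptance probability on $(x,a)$ is bimodal—equal either to $\Pi(x)$ with gap $\gamma$ or to exactly $1/2$, the latter pushing $(x,a)$ outside the promise of $\Pi'$. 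A plausible route is to prefix $M$'s $\mathsf{PostBQP}$ simulation with a self-consistency test on $a$ whose failure forces $M$ to XOR its output with a fresh fair coin, thereby flattening the acceptance probability. Formalizing this bimodality inside Aaronson's construction, and checking that $a = a_n$ reliably passes the test while every $a$ on which the raw $\mathsf{PostBQP}$ simulation disagrees with $\Pi$ reliably fails, is where I expect the bulk of the technical work; once it is in hand, $\Pi'$ furnishes the desired explicit $\mathsf{PromisePP}$ witness to the separation.
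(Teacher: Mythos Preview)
Your overall framework---take a witness $\Pi$, apply Aaronson's $\mathsf{PromiseBQP/qpoly}\subseteq\mathsf{PromisePP/poly}$ to obtain a uniform $\mathsf{PP}$ machine $M$ and advice $\{a_n\}$, then absorb $a_n$ into the input---is exactly right, and the arguments you sketch for $\Pi'\in\mathsf{PromisePP}$ and $\Pi'\notin\mathsf{PromiseBQP/poly}$ are sound. The difficulty you have flagged with $\Pi'\in\mathsf{PromiseBQP/qpoly}$ is genuine \emph{for your definition of $\Pi'$}, but it is entirely self-inflicted.

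The issue is that you have made the promise of $\Pi'$ far too large: you declare \emph{every} pair $(x,a)$ on which $M$ exhibits a gap to be a promise instance. This forces you to worry about ``bad'' advice strings $a\neq a_n$ on which $M(x,a)$ has a gap but disagrees with $\Pi(x)$; on such inputs your ignore-$a$-and-run-$Q(x,\ket{\psi_n})$ algorithm would be wrong. Your proposed bimodality fix---engineering a self-consistency test inside the $\mathsf{PostBQP}$ simulation that detects bad $a$ and flattens the output to $1/2$---has no obvious implementation: there is no reason the $\mathsf{PP}$ machine can recognize whether an arbitrary $a$ is ``the right advice,'' and you have not indicated what property of Aaronson's construction would make this possible.

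The paper avoids all of this with a one-line change: define $\Pi'$ so that its \emph{only} promise instances are the pairs $(x,a_n)$ with the \emph{specific} correct advice string, with yes/no inherited from $\Pi$. Your three properties are then all immediate. Membership in $\mathsf{PromisePP}$ holds because $M$, run on $(x,a_n)$, is precisely the $\mathsf{PromisePP/poly}$ machine for $\Pi$ with its advice supplied. Membership in $\mathsf{PromiseBQP/qpoly}$ is now trivial: on any promise instance the second coordinate \emph{is} $a_n$, so ignoring it and running $Q(x,\ket{\psi_n})$ gives $\Pi(x)$, which is by definition the correct answer for $\Pi'$. And non-membership in $\mathsf{PromiseBQP/poly}$ follows by the same hardcoding argument you gave, since the promise instances $(x,a_n)$ are still present. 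No bimodality engineering is needed; you simply exploit the freedom, in a promise problem, to leave every troublesome $(x,a)$ outside the promise.
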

\begin{proof}
We prove the contrapositive. \ Suppose
$$ \mathsf{PromisePP} \cap \mathsf{PromiseBQP/qpoly} \subseteq \mathsf{PromiseBQP/poly}. $$
Let $\Pi=(\Pi_Y,\Pi_N)$ be a promise problem in $\mathsf{PromiseBQP/qpoly}$. \ Aaronson \cite{aar:adv} showed that $\mathsf{PromiseBQP/qpoly} \subseteq \mathsf{PromisePP/poly}$. \ Let $\{w_n\}_{n\ge 1}$ be the  advice strings for the $\mathsf{PromisePP/poly}$ machine. \ Then we define a new promise problem $\Pi'$, whose yes-instances have the form $(x,w_n)$ for $x\in\Pi_Y \cap \bin^n$, and whose no-instances have the form $(x,w_n)$ for $x\in\Pi_N \cap \bin^n$. \ Clearly $\Pi'\in \mathsf{PromisePP}$. \ We also have $\Pi'\in \mathsf{PromiseBQP/qpoly}$, since we can just ignore $w_n$. \ By assumption, then, $\Pi'\in \mathsf{PromiseBQP/poly}$. \ But moving $w_n$ back to the advice, this means that $\Pi\in \mathsf{PromiseBQP/poly}$ as well. \ Therefore $\mathsf{PromiseBQP/qpoly} = \mathsf{PromiseBQP/poly}$.
\end{proof}

So in particular, under a plausible complexity assumption, namely $\mathsf{BQP/poly}\ne \mathsf{BQP/qpoly}$, there \textit{is} an explicit problem in $\mathsf{FBQP/qpoly}$ but not in $\mathsf{FBQP/poly}$---for observe that promise problems are a special case of relational problems.

Of course, for relational problems we would like to do better than this observation, by constructing an explicit problem in $\mathsf{FBQP/qpoly} \setminus \mathsf{FBQP/poly}$ under a ``standard'' hardness assumption, one about complexity classes like $\mathsf{EXP}$ or $\mathsf{PSPACE}$ or $\mathsf{P/poly}$.

\section{Toward Quantum Information Supremacy\label{SUPREMACY}}

The extreme simplicity of the circuit in Figure \ref{cktfig} to measure the advice state $\ket{\psi_n}$---namely, a linear number of 1- and 2-qubit Clifford gates---raises the question of whether an experiment ``witnessing'' the separation between $\mathsf{FBQP/rpoly}$ and $\mathsf{FBQP/qpoly}$ might be feasible with current technology, and on a large enough scale to be interesting.

As it happens, Kumar, Kerenidis, and Diamanti \cite{kkd} reported an experimental demonstration of the Bar-Yossef-Jayram-Kerenidis Hidden Matching protocol \cite{bjk} in 2018. \ However, their experiment used an optical coherent state with $2^n$ modes, and tiny average photon number per mode, in order to simulate $n$ qubits. \ It therefore didn't directly test the question of whether $n$ qubits (encoded, say, using $n$ entangled particles) require $\exp(n) \gg n$ bits to simulate classically. \ This is the question that we propose to test now.

We are finally in the era of small programmable quantum computers: devices that can run more-or-less arbitrary circuits (subject to locality constraints) on $\sim 100$ qubits and $\sim 1000$ gates, and then extract a detectable signal on measurement. \ Within the past few years, these devices have been used to assert the milestone of ``quantum computational supremacy''---that is, a clear advantage over currently-available classical algorithms and hardware---for contrived tasks such as Random Circuit Sampling and BosonSampling (see, e.g., \cite{arute,ustc}). 

Notably, these quantum supremacy tasks are validated using, e.g., Google's \textit{Linear Cross-Entropy Benchmark} \cite{arute}, which lets us see them as literally \textit{relational problems}. \ In Random Circuit Sampling, for example, we are given as input a classical description of an $n$-qubit quantum circuit $C$. \ We are then asked to output any distinct strings $s_1,\ldots,s_k\in\bin^n$ that satisfy an inequality such as
$$\sum_{i=1}^k |\langle s_i | C | 0^n \rangle|^2 \ge \frac{1.002 k}{2^n}.$$
When, say, $k\approx n$, the above yields a relational problem in $\mathsf{FBQP}$ that is plausibly conjectured not to be in $\mathsf{FBPP}$ (see, e.g., Aaronson and Gunn \cite{agunn}).

Of course, any conjecture of this sort rests on unproved computational hardness assumptions: if nothing else, then $\mathsf{P}\ne\mathsf{PSPACE}$! \ Alas, ``standard'' hardness assumptions have not sufficed here. \ The fundamental drawback of current quantum supremacy experiments is that, with each concession that we need to make to experimental reality---for example, depolarizing noise, photon losses, a limited number of qubits (to keep the classical verification feasible), limited circuit depth (to control the noise), limited qubit connectivity, etc.---the relevant hardness assumptions move to shakier and shakier ground. \ Furthermore, the worry about classical spoofing is far from hypothetical! \ Classical algorithms for simulating noisy random quantum circuits \textit{have} improved, both in theory and in practice (see, e.g., \cite{gaoetal,panchenzhang,agllv})---if not enough to kill the current claims of quantum supremacy outright, then enough to call them into reasonable doubt.

Thus, wouldn't it be great if a meaningful quantum supremacy experiment could be designed based on \textit{no} unproved hardness assumptions? \ Such an experiment might, for example, try to falsify the hypothesis that every ``realistic'' entangled state of $n$ qubits is secretly describable using $p(n)$ classical bits, for some small polynomial $p$, whether due to noise or experimental limitations or even a breakdown of quantum mechanics itself. \ Note that this marks a fundamental difference compared to existing experiments based on (say) Bell inequality violations---whereas the Bell/CHSH experiments test the nonlocal nature of quantum correlations, they in no way test the exponential dimensionality of Hilbert space. \ As far as we know, the experiments we propose here would be the first directly to test the latter without relying on any unproved computational assumptions.

In such an experiment, we might first prepare a random $n$-qubit entangled state $\ket{\psi}$, then measure $\ket{\psi}$ in a basis $B$ chosen randomly and ``on the fly''---just like Alice's and Bob's measurement bases in the Bell/CHSH experiment are ideally chosen when the entangled photons are already in flight. \ We would repeat this process many times, with a new random basis $B$ each time, collect statistics on the measurement outcomes, and then argue that no $p(n)$-bit classical digest of $\ket{\psi}$ could possibly have allowed those statistics to be reproduced.

This is exactly what we suggest to do, by leveraging the unconditional separation between $\mathsf{FBQP/rpoly}$ and $\mathsf{FBQP/qpoly}$, which in turn is based on the unconditional separation between randomized and quantum one-way communication complexities. \ Note that such an experiment would almost certainly be impractical with current hardware, if the required measurements on $\ket{\psi}$ were complicated ones. \ In practice, then, it is essential that we have not merely an exponential separation in one-way communication complexities for a relational problem, but one wherein Bob's measurements are ``simple''---which is the content of $\mathsf{FBQP/rpoly}\ne\mathsf{FBQP/qpoly}$.

The detailed consideration of such an experiment is beyond this paper's scope. \ Briefly, though, laying the groundwork for this experiment would involve refining and improving the $\mathsf{FBQP/rpoly}\ne\mathsf{FBQP/qpoly}$ separation in several ways. \ Firstly, one would want a separation between randomized and quantum advice length that was as \textit{quantitatively tight} as possible, and that also included concrete bounds for particular small numbers of qubits $n$, such as $20$ or $30$. \ Secondly, one would want to account for the complexity of \textit{preparing} the advice state $\ket{\psi}$: for example, what if we chose $\ket{\psi}=C\ket{0^n}$, where $C$ is a random quantum circuit with $m$ gates? \ Thirdly, one would, if possible, want the measurements of $\ket{\psi}$ to be \textit{even simpler} than the one from Figure \ref{cktfig}, and lower-depth: for example, could one even measure each qubit separately? \ Fourthly, one would want an analysis that accounted for $\ket{\psi}$ being \textit{extremely noisy}---as it will be, in any near-term implementation---and that carefully quantified the experimental resources needed to achieve a clear separation between randomized and quantum advice length even in the teeth of the noise. \ See Section \ref{OPEN} for further discussion of these challenges, especially the quantitative tightness one.

Once all the challenges are taken into account, the separation between randomized and quantum information achievable with current devices might be rather modest, as it was in the earlier work by Kumar, Kerenidis, and Diamanti \cite{kkd}. \ For example, perhaps it will be possible to perform an experiment with $n\approx 20$ qubits, using $2$-qubit gates of $99.8\%$ fidelity or whatever, to verify that any secretly classical description of the qubits' state (even a probabilistic description) would need at least $\sim 100$ bits, in order to explain the observed success at measuring the state to solve a relational problem. \ In our view, though, this would already be a historic result, sufficient to disturb the certainty of those who regard the vastness of Hilbert space as just a theoretical fiction. \ We hope to address some of the challenges in future work.

\section{The Power of \textsf{FBPP}\label{FBPPSEC}}

In this section we show several senses in which $\mathsf{FBPP}$ behaves differently from its decision-problem counterpart. \ We start with an unconditional separation between $\mathsf{FP}$ and $\mathsf{FBPP}$.

\fpneqfbpprestate*

\begin{proof}
Recall the definition of Levin's time-bounded Kolmogorov complexity: given a string $y$,
$$ \operatorname{Kt}(y) := \min_{P : P()=y} (|P| + \log_2 t(P)): $$
that is, we minimize the length of a program $P$ (in some fixed programming language) plus the logarithm of $P$'s runtime, over all programs $P$ that output $y$ given a blank input. \ Now consider the relation
$$ R = \{ (x,y) : |x|=|y|, \operatorname{Kt}(y)\ge \frac{|y|}{2} \}. $$
We first show that $R\in \mathsf{FBPP}$. \ Given an input of length $n$, the strategy depends on the allowed error probability $\varepsilon$. \ If $\varepsilon \ge \frac{1}{2^{n/2}}$, then we can simply output a uniformly random $y\in\bin^n$; a counting argument will then imply that $\operatorname{Kt}(y)\ge \frac{n}{2}$ with probability at least $1-\varepsilon$. \ If, on the other hand, $\varepsilon < \frac{1}{2^{n/2}}$, then we can use brute force to find and output the lexicographically first string $y\in\bin^n$ such that $\operatorname{Kt}(y)\ge \frac{n}{2}$. \ This takes time exponential in $n$, so polynomial in $\frac{1}{\varepsilon}$.

Next we show that $R\not\in \mathsf{FP}$. \ Let $A$ be a deterministic algorithm with polynomial running time $p(n)$; then for all $n$, we clearly have
$$\operatorname{Kt}(A(0^n))\le |A| + \log_2 n + \log_2 p(n).$$
But the above is less than $\frac{n}{2}$ for all sufficiently large $n$, which proves that $A$ cannot compute $R$.
\end{proof}

Next we show that the unconditional separation still holds if $\mathsf{FP}$ and $\mathsf{FBPP}$ both have polynomial-sized advice.

\fpneqfbpppolyrestate*
\begin{proof}
For each $n$ and $x\in \bin^n$, we choose a subset $S_x \subset \bin^n$ uniformly at random and independently subject to $|S_x| = 2^{n/2}$. \ We then define the following relational problem $R$:
$$ R = \{ (x,y) : |y|=|x|, y\not\in S_x \}.$$
In other words: given as input an $n$-bit string $x$, the problem is to output an $n$-bit string $y$ that is not in $S_x$.

We first claim that $R\not\in \mathsf{FP/poly}$, with probability $1$ over the choice of $S_x$’s. \ This is because, for any fixed $\mathsf{FP/poly}$ algorithm $C$ and any $n$, we have
$$\Pr_{\{S_x\}_{x\in\bin^n}}[(x,C(x))\in R \; \forall x\in\bin^n] \le \left(1-\frac{1}{2^{n/2}}\right)^{2^n} = \frac{1}{\exp(2^{n/2})}, $$
which remains small even after we take a union bound over all possible $C$'s.

By contrast, we claim that $R \in \mathsf{FBPP/poly}$. \ The algorithm is as follows: if $\varepsilon \ge 1/2^{n/2}$, then just output a uniformly random $y\in\bin^n$. \ If, on the other hand, $\varepsilon < 1/2^{n/2}$, then the advice string can have size $2^n n = \operatorname{poly}(n,1/\varepsilon)$, and can therefore just provide a giant list containing some $y\not\in S_x$ for each possible $x\in\bin^n$.
\end{proof}

Finally, we give strong evidence that $\mathsf{FBPP} \not\subset \mathsf{FP/poly}$.


\fppolythmrestate*

\begin{proof}
We use an idea of Buhrman and Torenvliet \cite{buhrtor}, which is in turn based on the $\mathsf{IP}=\mathsf{PSPACE}$ theorem and conditional time-bounded Kolmogorov complexity.\footnote{Buhrman and Torenvliet \cite{buhrtor} used these ideas to prove that $\mathsf{PSPACE}\subseteq\mathsf{NP}^{R_s^{CS}}$, where $R_s^{CS}$ is an oracle to decide whether a given string has maximal space-bounded conditional Kolmogorov complexity. \ We, by contrast, are interested in the nonuniform complexity of \textit{relational} problems. \ Outputting strings of large time-bounded conditional Kolmogorov complexity is a convenient relational problem for proving the implication we want.}

Given strings $x$ and $y$, we define $$\operatorname{Kt}(y|x) := \min_{P : P(x)=y} (|P| + \log_2 t(P,x)),$$
or the \textit{time-bounded Kolmogorov complexity of $y$ conditioned on $x$}, to be the minimum, over all programs $P$ (in some fixed programming language) such that $P(x)=y$, of the bit-length of $P$ plus the log of its runtime on input $x$.

We now define the following relation:
$$ R^* := \{ ((x,0^n),y) : y\in\bin^n, \operatorname{Kt}(y|x) \ge n/2 \}. $$
In other words, given as input $x$ (which could have some arbitrary length $m=\operatorname{poly}(n)$) and $0^n$, the problem is to output an $n$-bit string $y$ that cannot be computed too quickly by any short, deterministic program given $x$.

Our first claim is that $R^*\in \mathsf{FBPP}$. \ The argument is the same as in the proof of Theorem \ref{fpneqfbpp}; the fact that we condition on $x$ makes no difference. \ And so---we save this fact for later---if $\mathsf{FBPP} \subset \mathsf{FP/poly}$, then $R^*$ is also decided by some polynomial-size family of circuits $\{C_{m,n}\}_{m,n\ge 1}$.

We now recall the relevant facts about the proof of $\mathsf{IP}=\mathsf{PSPACE}$ (see \cite{shamir} for details). \ Let $\phi$ be an instance of $TQBF$, the canonical $\mathsf{PSPACE}$-complete problem. \ Then to verify that $\phi\in TQBF$, the verifier engages the prover in an $n$-round conversation about a certain complicated (but polynomial-sized) arithmetic expression of the form
$$ P = \sum_{x_1\in\bin} R_{x_1} \prod_{x_2\in\bin} R_{x_1} R_{x_2} \sum_{x_3\in\bin} R_{x_1} R_{x_2} R_{x_3} \cdots \prod_{x_n\in\bin} R_{x_1} \cdots R_{x_n} \varphi(x_1,\ldots,x_n)$$
over the finite field $\mathbb{F}_q$, where we take $q$ to be a prime such that $q> 16^{n^2}$. \ The expression involves three types of quantifiers over variables: sums, products, and so-called \textit{degree reduction operators} (these are the $R_{x_i}$'s).

The expression $P$ is constructed by carefully arithmetizing $\phi$ to maintain the following properties:
\begin{enumerate}
\item $P=1$ if $\phi\in TQBF$ while $P=0$ if $\phi\not\in TQBF$.
\item For any $t$, if we substitute field values $r_1,\ldots,r_{t-1}\in \mathbb{F}_q$ at the first $t-1$ quantifiers in an appropriate way, remove the $t^{th}$ quantifier, and keep in place everything to the right of the $t^{th}$ quantifier, then we are left with a univariate polynomial $h_t(x_i)$ of degree at most $\operatorname{poly}(n)$ over $\mathbb{F}_q$, where $x_i$ is the variable that appears in the $t^{th}$ quantifier. \ (The whole purpose of the degree reduction operators is to ensure this.)
\end{enumerate}

The conversation proceeds in $T\le n^2+1$ rounds, one for each quantifier. \ At round $t$, the prover sends the verifier a univariate polynomial $g_t : \mathbb{F}_q\rightarrow \mathbb{F}_q$, and makes the crucial claim that $g_t=h_t$ as polynomials over $\mathbb{F}_q$, where $h_t$ is the univariate polynomial discussed previously, which depends on the random finite field values $r_1,\ldots,r_{t-1}\in \mathbb{F}_q$ chosen by the verifier in the previous rounds. \ After applying some preliminary checks, the verifier then tests the prover's claim by choosing a new $r_t \in \mathbb{F}_q$ uniformly at random and sending it to the prover, and the conversation continues. \ Finally, at the very last round, the verifier can check $g_T$ directly against a polynomial obtained by arithmetizing $\phi$.

The key fact is that, by the Fundamental Theorem of Algebra, if $g_t\neq h_t$ as polynomials, then $g_t(r)$ and $h_t(r)$ can coincide on at most $\max\{\deg(g_t),\deg(h_t)\}=n^{O(1)}$ values of $r$. \ And these are the only values of $r\in \mathbb{F}_q$ that can cause the protocol to fail at round $t$ (in the sense that the verifier will now accept even though $\phi\not\in TQBF$).

For technical reasons to be explained later, let $s\in\bin^{n^{O(1)}}$ be a polynomial-sized string that is chosen uniformly at random and then fixed.

We now make the following observation: \textit{in place of a uniformly random $r_t \in \mathbb{F}_q$, the verifier could send any $r_t \in \mathbb{F}_q$ such that $\operatorname{Kt}(r_t | s,\phi,r_1,\ldots,r_{t-1},g_t) \ge 2n^2$.}

To see why, consider a ``bad'' $r_t$: that is, one such that $g_t(r_t)=h_t(r_t)$, even though $g_t\neq h_t$ as polynomials. \ As we said, there can be at most $n^{O(1)}$ such bad $r_t$'s. \ Furthermore, we claim that the complete list of bad $r_t$'s can be generated in $2^T n^{O(1)}$ time, given $\phi$ and $s,r_1,\ldots,r_{t-1}$ and $g_t$ as input, with overwhelming probability over the choice of $s$.

To generate the list, we first compute $h_t$ explicitly as a polynomial over $\mathbb{F}_q$, by simply ``brute-forcing'' every sum and product over a variable $x_i\in\bin$ and every degree reduction operator that appears to the right of the $t^{th}$ quantifier. \ This takes time $2^{T-t} n^{O(1)}$, since we pick up a factor of $2$ for every quantifier that needs to be brute-forced. \ We next factor the polynomial $g_t(r)-h_t(r)$ over the finite field $\mathbb{F}_q$---for example, by using the randomized algorithm due to Berlekamp \cite{berlekamp:polyfac}, which runs in $\operatorname{poly}(n,\log q)=\operatorname{poly}(n)$ time, and which fails with probability at most $1/2^{p(n)}$, where $p$ is a polynomial that we can make as large as needed by choosing a large enough randomness string $s$ to feed to Berlekamp's algorithm. \ Finally, from the degree-$1$ irreducible factors of $g_t-h_t$, we extract the solutions $r\in \mathbb{F}_q$ to $g_t(r)=h_t(r)$.

Thus, letting $\Pi$ be a program that does the above, for any bad $r_t$, we have
$$ \operatorname{Kt}(r_t | s,\phi,r_1,\ldots,r_{t-1},g_t) \le |\Pi| + \log_2 \deg(g_t-h_t) + \log_2(2^T n^{O(1)}),$$
which is less than $2T$ for all large enough $n$ and $T\approx n^2$. \ And so, taking the contrapositive, if $r_t$ has conditional time-bounded Kolmogorov complexity at least $2T$, then it cannot be bad.

But if we set $x:=\langle s,\phi,r_1,\ldots,r_{t-1},g_t\rangle$, then the problem of finding an $r_t$ such that $$\operatorname{Kt}(r_t | s,\phi,r_1,\ldots,r_{t-1},g_t) \ge 2T$$ can be solved by finding a $y\in\bin^{4n^2}$ such that $((x,0^{4n^2}),y)$ is in the relation $R^*$. \ And we said that, by the assumption $\mathsf{FBPP} \subset \mathsf{FP/poly}$, there is a polynomial-size circuit family $\{C_{m,n}\}_{m,n\ge 1}$ that does this.

Hence we can decide $TQBF$ in $\mathsf{NP/poly}$, as follows. \ Given as input a $TQBF$ instance $\phi(x_1,\ldots,x_n)$, the polynomial-sized advice provides a description of an appropriate circuit $C_{m,4n^2}$, along with a hardwired value for the randomness string $s$. \ Given $\phi$ and given this advice, the $\mathsf{NP}$ prover is asked to provide a complete transcript for the $\mathsf{IP}=\mathsf{PSPACE}$ protocol, \textit{assuming that the verifier generates each of its messages using $C_{m,4n^2}$}. \ Finally, the $\mathsf{NP}$ verifier checks each step in this transcript, using $C_n$ to make sure that the prover computed the $\mathsf{IP}$ verifier's messages correctly.

By the reasoning above, this derandomization of $\mathsf{IP}=\mathsf{PSPACE}$ is sound: any failure would imply that $C_{m,4n^2}$ had generated a message of small time-bounded conditional Kolmogorov complexity. \ Or more precisely, this is true with overwhelming probability over the choice of $s$, which means that there must exist fixed $s$'s that work when hardwired into the advice. \ Therefore $\mathsf{PSPACE} \subset \mathsf{NP/poly}$ as claimed.
\end{proof}

We conclude this section by observing a barrier to any unconditional proof of $\mathsf{FBPP} \not\subset \mathsf{FP/poly}$, as it would lead to new circuit lower bounds.

\fbppupperboundrestate*

\begin{proof}
We prove the first part of the theorem; the second part is an immediate consequence. \ Let $R \in \mathsf{FBPP}$. \ For simplicity, suppose there exists a polynomial $p(|x|)$ such that for every $(x, y) \in R$, $|y| = p(|x|)$ (which can always be assumed under a suitable efficient encoding).

Let $A(x,0^{1/\varepsilon})$ be the probabilistic algorithm for computing $R$ with probability at least $1 - \varepsilon$ in time $\operatorname{poly}(|x|, 1/\varepsilon)$. \ Fix $\varepsilon(|x|) = 4^{-p(|x|)}$.

Let $\Pi$ be the following promise problem of, given an input $(x, z)$, to decide whether:
\begin{itemize}
\item[(YES)] With probability at least $2/3 \cdot 3^{-|z|}$, the prefix of $A(x, 0^{1/\varepsilon(|x|)})$ is $z1$, or
\item[(NO)] With probability at most $1/3 \cdot 3^{-|z|}$, the prefix of $A(x, 0^{1/\varepsilon(|x|)})$ is $z1$,
\end{itemize}
promised that one of (YES) or (NO) is the case.

Observe that $\Pi \in \mathsf{PromiseBPEXP}$: the algorithm runs $A(x, 0^{1/\varepsilon(|x|)})$ on (say) $100^{|z|}$ independent random strings, and outputs YES or NO depending on whether than more than a $1/2 \cdot 3^{-|z|}$ fraction of the strings begin with $z1$. \ A Chernoff bound guarantees that the algorithm is correct with high probability, so long as $(x, z)$ is in the promise.

Next, we claim that $R \in \mathsf{FP}^\Pi$. \ The algorithm for outputting $(x, y) \in R$ is as follows. \ Let $z_0 = \emptyset$. \ For each $i \in p(|x|)$, compute $z_i = z_{i-1}\Pi(x,z_{i-1})$. \ That is, we obtain $z_i$ by appending $1$ to $z_{i-1}$ if the $\Pi$ oracle answers YES on $(x, z_{i-1})$, and by appending $0$ otherwise. \ Finally, output $z_{p(|x|)}$.

The correctness of the algorithm follows by observing that after step $i$ of the algorithm, $A(x, 0^{1/\varepsilon(|x|)})$ has probability at least $3^{-i}$ of outputting a string that starts with $z_i$. \ The proof is by induction on $i$: either $(x, z_{i-1})$ satisfies the promise, in which case $z_i = z_{i-1}\Pi(x,z_{i-1})$ appears as a prefix with probability at least $2 \cdot 3^{-i}$, or else both $z_{i-1}0$ and $z_{i-1}1$ appear as a prefix with probability at least $3^{-i}$. \ Then, $y = z_{p(|x|)}$ must be a string that $A(x, 0^{1/\varepsilon(|x|)})$ outputs with probability at least $3^{-p(|x|)}$. \ But since $A$ errs with probability at most $\varepsilon(|x|) = 4^{-p(|x|)}$, we conclude that $(x, y) \in R$.
\end{proof}




\section{Alternative Error Bounds\label{SMALLERROR}}

In this section, we consider some of the consequences of modifying the error bounds in the definition of $\mathsf{FBPP}$.

Define the class $\mathsf{FBPP}_{\operatorname{log}}$ exactly the same way as $\mathsf{FBPP}$, except that now the algorithm is required to take $\operatorname{poly}(n,\log 1/\varepsilon)$ time rather than merely $\operatorname{poly}(n,1/\varepsilon)$. \ In other words, we mandate that the algorithm can reduce the error probability to an exponentially small quantity in polynomial time. \ Clearly $\mathsf{FP} \subseteq \mathsf{FBPP}_{\operatorname{log}} \subseteq \mathsf{FBPP}$.

As we shall see here, the complexity situation for $\mathsf{FBPP}_{\operatorname{log}}$ differs dramatically from that for $\mathsf{FBPP}$. \ First, we observe that $\mathsf{FBPP}_{\operatorname{log}}$ \textit{cannot} be unconditionally separated from $\mathsf{FP}$:

\begin{proposition}
If $\mathsf{P}=\mathsf{NP}$, then $\mathsf{FBPP}_{\operatorname{log}} = \mathsf{FP}$.
\end{proposition}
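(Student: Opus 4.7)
The plan is to derandomize any $\mathsf{FBPP}_{\log}$ algorithm under the hypothesis $\mathsf{P} = \mathsf{NP}$, which collapses the polynomial hierarchy to $\mathsf{P}$. Let $R \in \mathsf{FBPP}_{\log}$ be computed by $A(x,0^{1/\varepsilon})$ in time $\operatorname{poly}(|x|, \log 1/\varepsilon)$, with outputs of length at most $p(|x|)$. On input $x$, I would first set $\varepsilon := 2^{-2p(|x|)}$. The defining feature of $\mathsf{FBPP}_{\log}$ is that $A$ still runs in $\operatorname{poly}(|x|)$ time with this choice, while its error probability drops to $2^{-2p(|x|)}$ — much smaller than $1/2^{p(|x|)}$, the reciprocal of the output-space size.

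The key observation I would use is the following contrapositive: any $y$ with $\Pr_r[A(x, 0^{1/\varepsilon}, r) = y] > \varepsilon$ must satisfy $(x,y) \in R$, since otherwise $y$ alone would account for more than an $\varepsilon$ fraction of the error. Moreover, such a $y$ is guaranteed to exist (whenever $R$ has any valid output for $x$): by pigeonhole over the $\le 2^{p(|x|)+1}$ possible outputs, some valid $y$ is produced with probability at least $(1-\varepsilon)/2^{p(|x|)+1}$, which exceeds $\varepsilon = 2^{-2p(|x|)}$ by an exponential factor. So finding \emph{any} $y$ whose output probability beats $\varepsilon$ is enough to produce a member of $R$.

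Next, I would find such a $y$ bit by bit. Consider the predicate "there exists $y$ extending the prefix $z$ such that $\Pr_r[A(x,0^{1/\varepsilon},r) = y] > \varepsilon$." Stockmeyer's approximate counting places this predicate in $\mathsf{BPP}^{\mathsf{NP}} \subseteq \mathsf{PH}$, and under $\mathsf{P} = \mathsf{NP}$ we have $\mathsf{PH} = \mathsf{P}$, so the predicate is decidable in deterministic polynomial time. I would then carry out the obvious binary search: starting from the empty prefix, at step $i$ set $z_i := z_{i-1}0$ if the predicate holds with $z_{i-1}0$ as the prefix, else $z_i := z_{i-1}1$; output $z_{p(|x|)}$ after $p(|x|)$ steps. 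By the key observation, the resulting $y$ lies in $R$, and the total runtime is polynomial.

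The main obstacle I anticipate is a minor one: Stockmeyer's algorithm yields only a $(1 \pm 1/\operatorname{poly})$-multiplicative approximation to probabilities, not an exact count, so the predicate as stated is not literally in $\mathsf{PH}$ but rather its ``promise'' version is. My choice of an exponentially small $\varepsilon$ is designed precisely to sidestep this: the gap between the probabilities of guaranteed-good outputs ($\ge 2^{-p(|x|)-1}$) and the threshold ($\varepsilon = 2^{-2p(|x|)}$) is exponentially wide, so any reasonable approximation suffices to distinguish the two cases, and the search proceeds cleanly. Beyond this, one must handle variable-length outputs (e.g.\ by padding to length $p(|x|)$ under a fixed end-of-string convention), but no interesting difficulty arises there.
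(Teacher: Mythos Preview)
Your proposal is correct and takes essentially the same approach as the paper: set $\varepsilon = 4^{-p(n)}$ so that the $\mathsf{FBPP}_{\log}$ machine still runs in $\operatorname{poly}(n)$ time, observe by averaging that some output $y$ has probability $\gtrsim 2^{-p(n)} \gg \varepsilon$ and hence must satisfy $(x,y)\in R$, and then invoke Stockmeyer approximate counting in $\mathsf{PH}=\mathsf{P}$ to locate such a $y$. Your write-up is in fact more explicit than the paper's (which simply says ``use Stockmeyer approximate counting to find a $y$ such that $\Pr[A(x)=y]\ge 0.1/2^{p(n)}$''): you spell out the bit-by-bit search and flag the promise-problem subtlety, correctly noting that the exponential gap between $2^{-p(n)-1}$ and $2^{-2p(n)}$ absorbs any multiplicative approximation error.
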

\begin{proof}
Let $R\in \mathsf{FBPP}_{\operatorname{log}}$, and let $p$ be a polynomial such that $|y|\le p(n)$ for all $(x,y)\in R$ with $|x|\le n$. \ Set $\varepsilon := 1/4^{p(n)}$. \ Then there exists a randomized algorithm $A$ that, given $x\in\bin^n$, outputs a $y\in\bin^{\le p(n)}$ such that $(x,y)\in R$, with success probability at least $1-\varepsilon$, in $\operatorname{poly}(n,\log 1/\varepsilon) = \operatorname{poly}(n)$ time.

This means that, under the assumption $\mathsf{P}=\mathsf{NP}$ (and hence $\mathsf{P}=\mathsf{PH}$), in $\mathsf{FP}$ we can use Stockmeyer approximate counting \cite{stockmeyer} to find a $y\in\bin^{\le p(n)}$ such that (say) $\Pr[A(x)=y]\ge \frac{0.1}{2^{p(n)}}$, which must exist by an averaging argument. \ Such a $y$ must then satisfy $(x,y)\in R$, by the assumption that $A$ succeeds with probability at least $1-\varepsilon$.
\end{proof}

Second, we observe that the analogue of Adleman's Theorem \cite{adleman:bpp} \textit{does} hold for $\mathsf{FBPP}_{\operatorname{log}}$:

\begin{proposition}
$\mathsf{FBPP}_{\operatorname{log}} \subset \mathsf{FP/poly}$.
\label{prop:fbpplogadlemans}
\end{proposition}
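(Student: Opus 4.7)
The plan is to mimic Adleman's original argument, which is clean once the exponential error reduction of $\mathsf{FBPP}_{\operatorname{log}}$ is in hand. Let $R \in \mathsf{FBPP}_{\operatorname{log}}$, and let $A$ be the polynomial-time randomized algorithm witnessing membership, so that for every $x$ admitting a valid $y$ and every $\varepsilon > 0$, $\Pr_r[(x, A(x, 0^{1/\varepsilon}, r)) \in R] \ge 1 - \varepsilon$, with $A$ running in time $\operatorname{poly}(|x|, \log(1/\varepsilon))$.

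First I would set $\varepsilon = 2^{-n-1}$ at input length $n$. By definition of $\mathsf{FBPP}_{\operatorname{log}}$, the runtime of $A(\cdot, 0^{1/\varepsilon}, \cdot)$ is then $\operatorname{poly}(n, \log(1/\varepsilon)) = \operatorname{poly}(n)$, and in particular the number of random bits $A$ reads is some polynomial $q(n)$. Next I would fix $n$ and apply a union bound in the Adleman style: for each fixed $x \in \{0,1\}^n$, the fraction of bad random strings $r \in \{0,1\}^{q(n)}$ (those causing $A$ to violate $R$ when a valid $y$ exists) is at most $\varepsilon$, so
\[
\Pr_{r \in \{0,1\}^{q(n)}}\bigl[\exists x \in \{0,1\}^n \text{ admitting a valid } y \text{ with } (x, A(x, 0^{1/\varepsilon}, r)) \notin R\bigr] \le 2^n \cdot 2^{-n-1} = 1/2 < 1.
\]
Hence some string $r_n \in \{0,1\}^{q(n)}$ simultaneously works for every $x \in \{0,1\}^n$.

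The $\mathsf{FP/poly}$ machine then takes as advice, for input length $n$, the string $r_n$ (which has polynomial length), and simulates $A(x, 0^{1/\varepsilon}, r_n)$ deterministically. This runs in deterministic polynomial time and succeeds on every $x \in \{0,1\}^n$ for which a valid $y$ exists, giving $R \in \mathsf{FP/poly}$.

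I do not anticipate any real obstacle: the entire argument rests on the fact that in $\mathsf{FBPP}_{\operatorname{log}}$ we can afford to drive $\varepsilon$ below $2^{-n}$ in polynomial time, which is precisely the error regime that Adleman's union bound needs, and which $\mathsf{FBPP}$ itself does \emph{not} supply. The only minor point worth double-checking is the convention about inputs without any valid $y$: for those, the $\mathsf{FP/poly}$ machine may output anything, and the definition of $\mathsf{FBPP}_{\operatorname{log}}$ imposes no requirement, so the union bound above restricted to ``promise-satisfying'' $x$ is all that is needed.
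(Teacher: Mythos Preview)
Your proof is correct and follows essentially the same approach as the paper: set $\varepsilon$ exponentially small in $n$ (you use $2^{-n-1}$, the paper uses $1/4^n$), observe that the $\mathsf{FBPP}_{\operatorname{log}}$ runtime is still $\operatorname{poly}(n)$, and then fix a single good randomness string via the union bound, exactly as in Adleman's theorem. Your write-up is simply more detailed than the paper's two-sentence version.
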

\begin{proof}
Set $\varepsilon := 1/4^n$. \ Then
the $\mathsf{FBPP}_{\operatorname{log}}$ machine takes $\operatorname{poly}(n,\log 1/\varepsilon) = \operatorname{poly}(n)$ time, and we can fix as the $\mathsf{FP/poly}$ advice a single randomness string that works for all inputs $x \in \bin^n$, which must exist by the union bound.
\end{proof}

Combining Proposition \ref{prop:fbpplogadlemans} with Theorem \ref{fppolythm} implies that $\mathsf{FBPP}_{\operatorname{log}} \neq \mathsf{FBPP}$, unless the polynomial hierarchy collapses. \ To summarize, then, $\mathsf{FBPP}_{\operatorname{log}}$ behaves more like the decision class $\mathsf{BPP}$ than it does like $\mathsf{FBPP}$.

A different choice would be to consider $\mathsf{FBPP}_{\operatorname{negl}}$, which we define as the class of all polynomially-bounded relations $R\subseteq \bin^{*}\times \bin^{*}$ for which there exists a polynomial-time randomized algorithm that, given $x$, outputs a $y$ such that $(x,y)\in R$ (whenever one exists) with success probability at least $1-\varepsilon(n)$, for some negligible function $\varepsilon(n) = \frac{1}{n^{\omega(1)}}$.

We have the following unconditional result, which supersedes the analogues of Theorems \ref{fppolythm}, \ref{fpneqfbpp}, and \ref{fpneqfbpppoly}:

\begin{theorem}
$\mathsf{FBPP}_{\operatorname{negl}} \not\subset \mathsf{FP/poly}$.
\end{theorem}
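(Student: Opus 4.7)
The plan is to reuse the random relation constructed in the proof of Theorem \ref{fpneqfbpppoly}. Specifically, for each $n$ and each $x \in \bin^n$, let $S_x \subset \bin^n$ be a uniformly random subset of size $2^{n/2}$, drawn independently across $x$, and define
$$R = \{(x,y) : |y| = |x|, \; y \notin S_x \}.$$
I claim that with probability $1$ over the choice of $\{S_x\}$, this $R$ lies in $\mathsf{FBPP}_{\operatorname{negl}} \setminus \mathsf{FP/poly}$, which proves the theorem.

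For the upper bound, the algorithm on input $x \in \bin^n$ simply outputs a uniformly random $y \in \bin^n$. Since $|S_x|/2^n = 2^{-n/2}$, the failure probability is at most $2^{-n/2}$ on every $x$, and this is a fixed negligible function of $n$. So $R \in \mathsf{FBPP}_{\operatorname{negl}}$, regardless of the draw of $\{S_x\}$. The key conceptual observation is that the two-regime algorithm from the proof of Theorem \ref{fpneqfbpppoly} only needed the advice-free random-output branch in order to hit negligible error; the advice was used solely to go from negligible to sub-negligible error, which $\mathsf{FBPP}_{\operatorname{negl}}$ does not require.

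For the lower bound $R \notin \mathsf{FP/poly}$, I would apply the counting argument of Theorem \ref{fpneqfbpppoly} essentially verbatim. For any fixed deterministic circuit $C$,
$$\Pr_{\{S_x\}_{x\in\bin^n}}\bigl[\forall x\in\bin^n : C(x) \notin S_x\bigr] \le \left(1 - 2^{-n/2}\right)^{2^n} \le \exp\!\left(-2^{n/2}\right).$$
Union-bounding over the $2^{\operatorname{poly}(n)}$ circuits of any fixed polynomial size $n^k$, this probability is still doubly-exponentially small. Since the subsets $\{S_x\}_{x \in \bin^n}$ are independent across $n$, the Borel–Cantelli lemma implies that for each $k$, almost surely only finitely many $n$ admit a size-$n^k$ circuit computing $R$ on $\bin^n$; taking a further union bound over $k \in \mathbb{N}$ shows that with probability $1$, no polynomial-size circuit family computes $R$.

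There is no real obstacle: the whole point is that the construction of Theorem \ref{fpneqfbpppoly} already exhibited an advice-free randomized algorithm with exponentially small (hence negligible) error, so weakening $\mathsf{FBPP/poly}$ to the advice-free $\mathsf{FBPP}_{\operatorname{negl}}$ costs nothing on the upper-bound side while the lower-bound argument is untouched.
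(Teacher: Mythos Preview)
Your proof is correct and essentially identical to the paper's: it reuses the random relation $R$ from Theorem~\ref{fpneqfbpppoly}, observes that outputting a uniformly random $n$-bit string already achieves negligible error $2^{-n/2}$ (so $R\in\mathsf{FBPP}_{\operatorname{negl}}$ without any advice), and cites the counting argument verbatim for $R\notin\mathsf{FP/poly}$. If anything, you spell out the Borel--Cantelli step a bit more explicitly than the paper does.
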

\begin{proof}
The relation $R$ that witnesses the separation is the same one from the proof of Theorem \ref{fpneqfbpppoly}, involving a ``bad output set'' $S_x\subset \bin^n$ chosen uniformly at random for each $x\in\bin^n$ subject to $|S_x|=2^{n/2}$. \ We already showed in Theorem \ref{fpneqfbpppoly} that $R\not\in \mathsf{FP/poly}$. \ For $R\in \mathsf{FBPP}_{\operatorname{negl}}$, the algorithm is just to output an $n$-bit string uniformly at random, independent of $x$.
\end{proof}

What is the relationship between $\mathsf{FBPP}_{\operatorname{negl}}$ and $\mathsf{FBPP}$? \ In one direction we have the following:

\begin{proposition}
$\mathsf{FBPP}_{\operatorname{negl}}\not\subset \mathsf{FBPP}$.
\end{proposition}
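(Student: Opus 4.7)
The plan is to diagonalize against all candidate $\mathsf{FBPP}$ algorithms while leaving uniform-random output as a viable $\mathsf{FBPP}_{\operatorname{negl}}$ strategy. First I would fix a negligible function $\delta(n) := 1/n^{\log n}$ and enumerate all pairs $(M_1,c_1),(M_2,c_2),\ldots$ where $M_i$ is a randomized Turing machine and $c_i \in \mathbb{N}$ is a polynomial exponent, so that every candidate $\mathsf{FBPP}$ algorithm (running in time $(n+1/\varepsilon)^c$) appears as some $(M_i, c_i)$. I would then pick a strictly increasing sequence of input lengths $n_1 < n_2 < \cdots$ and set $\varepsilon_i := \delta(n_i)/2$.

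For each $i$, let $D_i$ denote the output distribution of $M_i$ on input $(0^{n_i}, 0^{1/\varepsilon_i})$, viewed as a distribution on $\bin^{n_i}$. Let $T_i \subseteq \bin^{n_i}$ be the set of the $\lceil \delta(n_i) \cdot 2^{n_i} \rceil$ strings of largest $D_i$-mass; a short averaging argument (the top $k$ masses of any distribution on a set of size $N$ sum to at least $k/N$) shows $D_i(T_i) \ge \delta(n_i)$. Define
\[
R := \bigl\{ (x,y) : |y| = |x|,\text{ and if } |x| = n_i \text{ for some } i \text{ then } y \notin T_i \bigr\}.
\]

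To see $R \in \mathsf{FBPP}_{\operatorname{negl}}$, I would simply use the algorithm ``on input $x$, output a uniformly random $y \in \bin^{|x|}$'': its error on inputs of length $n$ is at most $\delta(n)$, which is negligible. To see $R \notin \mathsf{FBPP}$, observe that any purported $\mathsf{FBPP}$ algorithm for $R$ appears in the enumeration as some $(M_i, c_i)$; by construction, $M_i(0^{n_i}, 0^{1/\varepsilon_i})$ produces a string $y \in T_i$ (hence an invalid output) with probability at least $\delta(n_i) = 2\varepsilon_i > \varepsilon_i$, violating the success requirement of Definition~\ref{FBPPdef} at $\varepsilon = \varepsilon_i$.

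The main subtlety will be keeping the diagonalization coherent across different $i$, but since each $T_i$ affects the relation only on inputs of length $n_i$ and the $n_i$ are chosen distinct, there is no interference. The relation $R$ is defined nonuniformly, which is harmless since the proposition only asserts existence; this is the same nonuniformity already present in the proof of Theorem~\ref{fpneqfbpppoly}.
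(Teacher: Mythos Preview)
Your diagonalization is correct and yields a valid proof, but it takes a genuinely different route from the paper. The paper's argument uses the relation $R=\{(x,y): |x|=|y|,\ K(y)\ge |y|/2\}$, where $K$ is \emph{unbounded} Kolmogorov complexity: a uniformly random $n$-bit string satisfies $K(y)\ge n/2$ except with negligible probability, so $R\in\mathsf{FBPP}_{\operatorname{negl}}$; and $R\notin\mathsf{FBPP}$ because running a hypothetical $\mathsf{FBPP}$ algorithm with $\varepsilon=4^{-n}$ and then searching (by brute force over its random coins) for the lexicographically first string it outputs with probability at least $0.1\cdot 2^{-n}$ would give a fixed program that outputs Kolmogorov-random $n$-bit strings for every $n$, an impossibility. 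Your construction is more elementary in that it avoids Kolmogorov complexity entirely, and it has the pleasant side effect that the separating relation is decidable (each $T_i$ is computable from the enumeration and the clock $c_i$), whereas the paper's is not. The paper's proof, in turn, is shorter and dovetails with its broader Kolmogorov-complexity theme.

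One wrinkle worth tightening: the phrase ``viewed as a distribution on $\bin^{n_i}$'' glosses over the possibility that $M_i$ outputs strings of length $\ne n_i$, in which case the averaging bound $D_i(T_i)\ge\delta(n_i)$ need not hold as stated. This is harmless for the conclusion---wrong-length outputs already violate the $|y|=|x|$ clause and hence only add to the failure probability---but the cleanest patch is to assume without loss of generality that each enumerated $M_i$ is post-processed to force its output length to equal the input length.
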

\begin{proof}
Consider the relation
$$ R = \{ (x,y) : |x|=|y|, \operatorname{K}(y)\ge \frac{|y|}{2}\}, $$
where $\operatorname{K}$ is Kolmogorov complexity. \ We have $R\in \mathsf{FBPP}_{\operatorname{negl}}$ by simply outputting a random string of length $n=|x|$. \ On the other hand, if $R$ had an $\mathsf{FBPP}$ algorithm, then by setting $\varepsilon := 1/4^n$ and then searching for the lexicographically first string that the algorithm output with probability at least (say) $\frac{0.1}{2^n}$, we could deterministically compute an $n$-bit string such that $\operatorname{K}(x)\ge n/2$, which is impossible.
\end{proof}

In the other direction, we leave open whether $\mathsf{FBPP} \subset \mathsf{FBPP}_{\operatorname{negl}}$ or whether the two classes are incomparable. \ Clearly we do have $\mathsf{FBPP}_{\operatorname{log}} \subset \mathsf{FBPP}_{\operatorname{negl}}$, by setting (say) $\varepsilon = 1/2^n$.

\section{Sampling Problems\label{SAMP}}

We now show that the analogue of Theorem \ref{rpolypolyequal} is \textit{false} for sampling problems:


\samppolyrpolyrestate*

\begin{proof}
Like in Theorem \ref{rpolypolyequal}, we prove the separation involving $\mathsf{FBPP}$, as the separation involving $\mathsf{FBQP}$ is completely analogous. \ It suffices to choose some family of nonempty sets $S_n \subset \bin^n$, one for each $n$. \ Then consider the problem of outputting a uniformly random element of $S_n$ on input $x\in \bin^n$. \ This problem is clearly in  $\mathsf{SampBPP/rpoly}$, since we can take the randomized advice itself to be a uniformly random element of $S_n$. \ But if the $S_n$'s are chosen uniformly at random, then a counting argument shows that the problem has probability $0$ of being in $\mathsf{SampBPP/poly}$.
\end{proof}

Lastly, we observe that Theorem \ref{qpolythm} gives rise to a separation of sampling classes with randomized and quantum advice:

\samprpolyqpolyrestate*

\begin{proof}
Consider the problem of sampling from the output distribution of the errorless algorithm that solves the relation problem $R_F$ in the proof of Theorem \ref{qpolythm}. \ This sampling problem is in $\mathsf{SampBQP/qpoly}$. \ On the other hand, if this sampling problem were in $\mathsf{SampBQP/rpoly}$, this would imply $R_F\in \mathsf{FBQP/rpoly}$, because sampling from the distribution within total variation distance $\varepsilon$ would solve $R_F$ with probability $1 - \varepsilon$. \ But this would violate Theorem \ref{qpolythm}.
\end{proof}

To summarize, then, the classes $\mathsf{SampBQP/poly}$, $\mathsf{SampBQP/rpoly}$, and $\mathsf{SampBQP/qpoly}$ are \textit{all} distinct.

\section{Open Problems\label{OPEN}}

We gave an example of a relational problem $R_F$ in $\mathsf{FBQP/qpoly}$ but not in $\mathsf{FBQP/rpoly}$: indeed, one that is easy to solve using $n$ qubits of quantum advice, but requires $\Omega(2^{n/2})$ bits of classical randomized advice. \ While this separation is tight for $R_F$ itself, is there a different relational problem, solvable with $n$ qubits of advice, for which the lower bound on randomized advice reaches its maximum of $\Omega(2^n)$?

We note that Montanaro \cite{montanaro:lb} showed in 2019 that $\Omega(2^n)$ bits of classical advice are needed to perform certain sampling tasks, for which $n$ qubits of quantum advice suffice. \ In the other direction, Gosset and Smolin \cite{gossetsmolin} have shown that, for decision and promise problems solvable with $n$ qubits of quantum advice, $O(2^{n/2})$ bits of classical randomized advice suffice. \ The case of relational problems remains open. \ We conjecture that the answer is closer to $2^n$ than $2^{n/2}$.

As we discussed in Section \ref{SUPREMACY}, the gap between $2^n$ and $2^{n/2}$ would be \textit{extremely} useful to close, as a prerequisite to any possible quantum information supremacy experiment based on the $\mathsf{FBQP/rpoly} \ne \mathsf{FBQP/qpoly}$ separation. \ Other complexity results that would bear directly on such an experiment include:

\begin{enumerate}
\item a lower bound on the amount of classical randomized advice needed to simulate \textit{noisy} quantum advice for some relational problem---say, as a function of the fidelity $\delta$ between the actual advice state and a desired pure state;

\item a quantitative refinement of $\mathsf{FBQP/rpoly} \ne \mathsf{FBQP/qpoly}$ that took into account the circuit complexity of \emph{preparing} the $n$-qubit quantum advice state (which, in practice, would likely have to be much less than $2^n$); and

\item a reproof of $\mathsf{FBQP/rpoly} \ne \mathsf{FBQP/qpoly}$ in which the circuit to measure the quantum advice state was made \textit{as simple as possible}---could it even measure each of the $n$ qubits independently from the rest?
\end{enumerate}

Moving on, is there any sense in which $\mathsf{FBQP/qpoly}$ contains ``more'' problems than $\mathsf{FBQP/rpoly}$---i.e., can the classes be separated by counting the number of problems in each?

Can we separate $\mathsf{FBQP/qpoly}$ from $\mathsf{FBQP/rpoly}$ via an ``explicit'' problem, rather than relying on the probabilistic method? \ More concretely: can we show that, under some plausible hardness assumption, there is a relation in (say) $\mathsf{FBQP/qpoly} \cap \mathsf{FPSPACE}$ that is not in $\mathsf{FBQP/rpoly}$? \ Such a result would create more ``symmetry'' between that separation and our $\mathsf{FP/poly}\ne\mathsf{FP/rpoly}$ separation. \ For the latter, Theorem \ref{fppolythm} gave an explicit relational problem that plausibly realizes the nonconstructively proven separation: namely, the problem of outputting a string of high time-bounded Kolmogorov complexity, conditional on an input string $x$.

Would $\mathsf{FBPP}\subset \mathsf{FP/poly}$ have even stronger consequences than $\mathsf{PSPACE}\subset \mathsf{NP/poly}$, such as $\mathsf{PSPACE}\subset \mathsf{P/poly}$ or even $\mathsf{EXP}\subset \mathsf{P/poly}$? \ Also, is there a \textit{relativizing} proof that $\mathsf{FBPP}\subset \mathsf{FP/poly}$ would have unlikely consequences?

Does $\mathsf{FBPP}_{\operatorname{log}}=\mathsf{FP}$ under some plausible derandomization assumption? \ Is $\mathsf{FBPP} \subset \mathsf{FBPP}_{\operatorname{negl}}$? \ (Recall that $\mathsf{FBPP}_{\operatorname{log}}$ is the subclass of $\mathsf{FBPP}$ where we require $\operatorname{poly}(n,\log 1/\varepsilon)$ time to achieve error $\varepsilon$, while $\mathsf{FBPP}_{\operatorname{negl}}$ is the variant where we require only that the error probability be negligible.)

Are there examples of problems in $\mathsf{FBPP}$, $\mathsf{FBQP}$, $\mathsf{FP/rpoly}$, or the other relational classes studied in this paper, for which $\operatorname{poly}(1/\varepsilon)$ (rather than, say, $\operatorname{polylog}(1/\varepsilon)$) running time is actually needed to achieve error $\varepsilon$? \ (Certainly we've given examples of \textit{reductions} where such running time is needed.) \ For example, can we show that $\mathsf{FBPP}_{\operatorname{log}} \neq \mathsf{FBPP}$ \textit{unconditionally}, without assuming noncollapse of the polynomial hierarchy?

\section{Acknowledgments}

We thank Daochen Wang and Alexandru Cojocaru for conversations that suggested the $\mathsf{FBQP/poly}$ versus $\mathsf{FBQP/qpoly}$ problem, and Lance Fortnow, Iordanis Kerenidis, Ashley Montanaro, Ronald de Wolf, and David Zuckerman for helpful discussions. \ We thank anonymous referees for constructive feedback.

\bibliographystyle{plain}
\bibliography{thesis}

\appendix

\section{Conversation with GPT-4\label{GPT}}

Below is the transcript of a conversation, dated January 9, 2023, between one of the authors and a preliminary version of GPT-4 \cite{gpt4} about possible definitions for the class $\mathsf{FBPP}$. \ For ease of reading, the conversation has been \TeX ified (e.g.\ ``FBPP'' $\rightarrow$ $\mathsf{FBPP}$, ``eps'' $\rightarrow \varepsilon$). \ With mild prodding, GPT eventually settles on Definition \ref{FBPPdef}.

\begin{drama}
  \Character{Scott}{scott}
  \Character{GPT}{gpt}

  \scottspeaks: What's a good definition of $\mathsf{FBPP}$ -- the generalization of $\mathsf{BPP}$ to
relational problems?  

  \gptspeaks: $\mathsf{FBPP}$ could be defined as follows:
$\mathsf{FBPP}$ (Functional $\mathsf{BPP}$) is a complexity class for relational problems,
defined analogously to $\mathsf{BPP}$ for decision problems. \ A relational problem
is in $\mathsf{FBPP}$ if there is a probabilistic polynomial-time algorithm that,
given an input, produces an output consistent with the relation with
probability at least $2/3$ (or any other constant bounded away from
$1/2$).

\scottspeaks: But unlike with $\mathsf{BPP}$, amplification isn't possible, so the choice
of constant (such as $2/3$) actually matters, right?

\gptspeaks: The choice of constant does matter, but amplification is still possible in
many cases. \ Just like with $\mathsf{BPP}$, the algorithm could be run multiple
times to increase the confidence in the output. \ However, the
amplification process is more complicated for $\mathsf{FBPP}$, since it's not
always clear how to combine the outputs from multiple runs of the
algorithm to produce a single, more accurate output. \ This will depend
on the specifics of the relational problem at hand. \ But in many cases,
amplification should still be possible.

\scottspeaks: But in other cases not! \ Thus, can you suggest a definition of $\mathsf{FBPP}$
that would allow arbitrary error
probabilities?

\gptspeaks: One possible alternative definition could be: $\mathsf{FBPP}$ (Functional
$\mathsf{BPP}$) is a complexity class for relational
problems. \ A relational problem is in $\mathsf{FBPP}$ if there is a probabilistic
polynomial-time algorithm that, given an input, produces an output
consistent with the relation with probability strictly greater than
$1/2$.

\scottspeaks: Or what about with probability $1-\varepsilon$ for any specified
$\varepsilon > 0$?

\gptspeaks: Yes, that would work as well. \ The definition
could be: $\mathsf{FBPP}$ (Functional $\mathsf{BPP}$) is a complexity class for relational
problems. \ A relational problem is in $\mathsf{FBPP}$ if there is a probabilistic
polynomial-time algorithm that, given an input, produces an output
consistent with the relation with probability at least $1-\varepsilon$ for any
specified $\varepsilon > 0$.

\scottspeaks: And how is the running time of the algorithm allowed
to depend on $\varepsilon$?

\gptspeaks: The definition above doesn't specify, but one reasonable choice
would be to require that
the running time be polynomial in both the input size and $1/\varepsilon$. \ This
would ensure that the algorithm is efficient even as the error
probability is made arbitrarily small.
\end{drama}

\end{document}